\newtheorem{theorem}{Theorem}[section]
\newtheorem{lemma}[theorem]{Lemma}
\newtheorem{proposition}[theorem]{Proposition}
\newtheorem{corollary}[theorem]{Corollary}
\newtheorem{example}[theorem]{Example}
\renewcommand{\vec}[1]{\mathbf{#1}}
\DeclareMathOperator{\tr}{tr}
\newcommand{\beq}{\begin{equation}}  
\newcommand{\eeq}{\end{equation}}  
\newcommand{\bear}{\begin{array}}
\newcommand{\eear}{\end{array}}
\newcommand{\fib}{\mathfrak{f}} 
\newcommand\lax{{\bf L}}
\newcommand\mma{{\bf M}}
\newcommand\kax{{\bf K}}
\newcommand\cax{{\bf C}}
\newcommand\shi{{\cal S}}
\newcommand\rma{{\mathrm{a}}}
\title[Linearisable recurrences with the Laurent property]% end with percent
{A family of linearisable recurrences with the Laurent property} % This is the full title of the paper
\author{A.N.W. Hone  and C. Ward} 
\begin{document}
\maketitle
\begin{abstract}
We consider a family of nonlinear recurrences with the Laurent property. 
Although these recurrences are not generated by mutations in a cluster algebra, 
they fit within the broader framework of Laurent phenomenon algebras, as introduced 
recently by Lam and Pylyavskyy.  Furthermore, each member 
of this family is shown to be linearisable in two different ways, in the sense that 
its iterates satisfy both a linear relation with constant coefficients 
and a linear relation with periodic coefficients. Associated monodromy matrices 
and first integrals are constructed, and the  
connection with the dressing chain for Schr\"odinger operators is also explained.
\end{abstract}
\section{Introduction}

\noindent 
A nonlinear recurrence relation is said to possess the Laurent property if all 
of its iterates are Laurent polynomials in the initial data with integer coefficients. 
Particular recurrences %examples  %are now widely studied with examples being 
of the form
\begin{equation}\label{genform}
x_{n+N}x_n = F \left( x_{n+1},\ldots,x_{n+N-1} \right), 
\end{equation} 
with $F$ being a polynomial, first gained wider notice through the article by  Gale \cite{gale}, 
which highlighted the fact that such nonlinear relations can unexpectedly generate integer sequences. 
The second order recurrence %For instance, consider 
\beq \label{a1} 
x_{n+2}x_n = x_{n+1}^2 +1
\eeq 
is one of the simplest examples of this type.  
With the initial values $x_0=1$, $x_1=1$, this generates a sequence beginning 
$1,1,2,5,13,34, 89,233, \ldots$, and it turns out that $x_n$ is an integer for all $n$, 
but it is not obvious why this should be so. The fact that (\ref{a1}) has the 
Laurent property provides an explanation: if $x_0$ and $x_1$ are considered as 
variables, then each $x_n$ is an element of the Laurent polynomial ring 
$\mathbb{Z}[x_0^{\pm 1}, x_1^{\pm 1}]$, and so generates an integer 
when evaluated  at $x_0=x_1=1$. 

One of the motivations behind %%%For those arising from 
Fomin and Zelevinksky's cluster algebras, introduced in \cite{ca1}, %\cite{fz02} not Laurent phenom. here 
was to provide an axiomatic framework for the Laurent property, which %%%in order to gain a deeper understanding 
arises in  many different areas of mathematics. A cluster is an $N$-tuple ${\bf x}= (x_1,x_2, \ldots , x_N)$ 
which can be mutated in direction $j$ for each choice of $j\in\{1,\ldots ,N\}$, to produce a new cluster ${\bf x}'$ 
with components $x_k'=x_k$ for $k\neq j$ and $x_j'$ determined by the exchange relation 
\beq\label{exch} 
x_j'x_j = \prod_{b_{ij}>0}x_{i}^{b_{ij}} +\prod_{b_{ij}<0}x_{i}^{-b_{ij}}  
\eeq 
for a coefficient-free (geometric type) cluster algebra, 
where $B=(b_{ij})$ is  
%being  
an associated skew-symmetrizable $N\times N$ integer matrix. There is also 
an associated operation of matrix mutation, $B\to B'$, the precise details of which are 
not needed here.       

In general, sequences of mutations in a cluster algebra do not generate orbits of a 
fixed map, since the exponents in (\ref{exch}) both depend on $j$ and vary under the 
mutation of  the matrix $B$. However, Fordy and Marsh \cite{fordymarsh09} found criteria 
for a skew-symmetric matrix $B$ (or the equivalent quiver) to vary periodically under 
a cyclic sequence of mutations, and in particular showed how this can correspond 
to the iteration of a single recurrence relation of the form (\ref{genform}), with 
$F$ being a sum of two monomials obtained from an exchange relation (\ref{exch}).  
For the classification of mutation-periodic quivers in  \cite{fordymarsh09},  the building-blocks 
were provided by the affine $A$-type quivers (referred to there as ``primitives''), and  
it was also shown that sequences of cluster variables generated from these affine quivers %were cluster variables 
satisfy linear relations with constant coefficients. The same result was found independently in \cite{assem}, 
where it was conjectured (and also proved for type $D$) that there should be linear relations for 
cluster variables associated with all the affine Dynkin types; the full conjecture was proved in  \cite{keller}. 

In \cite{fordy10} it was shown that  the $\tilde{A}_{1,N-1}$ %-type 
exchange relations produce integrable maps for $N$ even;  
in the non-commutative setting, 
the exchange relations for $N$ odd were considered in \cite{difrancesco}. 
The simplest case ($N=2$) is just the recurrence (\ref{a1}), which preserves the Poisson bracket $\{ x_n,x_{n+1}\}=x_nx_{n+1}$, 
and has a conserved quantity $C$  (independent of $n$) that appears as a non-trivial coefficient 
in the linear relation for the sequence of cluster variables $(x_n)$, that is 
$$x_{n+2}-Cx_{n+1}+x_n=0, \quad \mathrm{with} \quad 
C=\frac{x_n}{x_{n+1}} + \frac{x_{n+1}}{x_n} + \frac{1}{x_nx_{n+1}}. $$ 
%$C= x_n/x_{n+1} + x_{n+1}/x_n + 1/(x_nx_{n+1})$. 
%The existence of the linear relation also gives a direct way to prove the Laurent property. 
Whenever a nonlinear recurrence is such that its iterates satisfy a  linear relation, as is the case here, we say that it is {\it linearisable}.

Q-systems, which arise from the Bethe ansatz for quantum integrable models, 
provide further examples of nonlinear recurrences  which are obtained 
from sequences of cluster mutations \cite{kedem}, and are also linearisable in the above sense % relations with constant coefficients 
\cite{difrancesco}. They correspond to characters of representations of Yangian algebras, 
as well as being reductions of discrete Hirota equations \cite{nahmkeegan}.  The simplest case is the 
$A_1$  Q-system, which coincides with    (\ref{a1}). 
In  addition to the aforementioned examples coming from affine $A$-type quivers, yet another family  
of linearisable recurrences obtained from cluster mutations is considered  in  \cite{fordyhone12}.

In this paper we will consider the family of nonlinear recurrences given by %we will consider is
\begin{equation}\label{myrec}
x_{n+N}x_n=x_{n+N-1}x_{n+1}+a\sum\limits_{i=1}^{N-1} x_{n+i}, 
\end{equation}
where $N\geq 2$ is an integer and $a$ is a constant parameter. We will show that all of these recurrences have the Laurent property, 
and they are linearisable. However, observe that the right hand side of (\ref{myrec}) has $N$ terms, 
which means that for $N\geq 3$ it cannot be obtained from an exchange relation in a cluster algebra, 
since for that to be so the polynomial $F$ should have the same binomial form as (\ref{exch});   
the case $N=2$ is not an exchange relation either (but see Example 4.11 in \cite{fordyhone12}).

The inspiration for (\ref{myrec}) comes from results of 
Heideman and Hogan \cite{heidhogan08}, who considered odd order recurrences of the form 
\begin{equation}\label{hhodd}
x_{n+2k+1}x_n = x_{n+2k}x_{n+1}+x_{n+k}+x_{n+k+1},
\end{equation}
and showed that an integer sequence is generated for initial data $x_0=x_1=\ldots=x_{2k}=1$. Their argument is  based on the fact that each  
of these integer sequences also satisfies  a linear recurrence relation, which can be   restated thus: 
\begin{proposition}[({\cite[Lemma 2]{heidhogan08}})] 
\label{hhthm}
For each integer $k\geq 1$, the iterates of recurrence (\ref{hhodd})  
with the initial data $x_i=1$, $ i=0,\ldots,2k$,  %also 
satisfy the homogeneous linear relation
\begin{equation}
 x_{n+6k}-(2k^2+8k+4)\big(x_{n+4k}-x_{n+2k}\big)-x_n=0 \label{heholin} 
\qquad \forall n\in{\mathbb Z}. 
\end{equation}
\end{proposition}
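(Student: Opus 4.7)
The plan is to argue by strong induction on $n$. Since (\ref{heholin}) is a shift-invariant homogeneous linear relation of order $6k$, it suffices to verify it at $6k$ consecutive values of $n$ and then establish the inductive step. For the base cases, I would iterate (\ref{hhodd}) starting from $x_0=x_1=\cdots=x_{2k}=1$ to obtain explicit expressions for $x_{2k+1},\ldots,x_{8k-1}$, and then check (\ref{heholin}) for $n=0,1,\ldots,2k-1$. For small $k$ this is a direct numerical calculation (for $k=1$ it amounts to verifying $x_6=85$, $x_7=393$ and so on, satisfying $x_{n+6}-14(x_{n+4}-x_{n+2})-x_n=0$); in general it reduces to a polynomial identity in $k$.

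For the inductive step, I fix $n$ and assume (\ref{heholin}) holds at all smaller indices. Rewriting (\ref{hhodd}) in its difference form $x_{n+2k+1}x_n-x_{n+2k}x_{n+1}=x_{n+k}+x_{n+k+1}$, I would then combine the three instances obtained by shifting $n\mapsto n, n+2k, n+4k$ with the linear relation (\ref{heholin}) applied at earlier indices, aiming to derive an identity that isolates $x_{n+6k}$ as a linear combination of $x_n$, $x_{n+2k}$ and $x_{n+4k}$ with the claimed coefficients. Because (\ref{heholin}) is linear while (\ref{hhodd}) is quadratic, the induction must be set up so that sufficiently many shifted quadratic relations are available to eliminate all product terms in a single stroke.

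The main obstacle is precisely this conversion of quadratic structure into a linear relation with the specific constant coefficient $2k^2+8k+4$. This number must emerge from the particular specialisation $x_i=1$, presumably via an auxiliary conserved quantity of the map defined by (\ref{hhodd}) which evaluates to $2k^2+8k+4$ on the chosen initial data. Producing such a conserved quantity explicitly, or, equivalently, exhibiting a telescoping identity between (\ref{hhodd}) at various shifts and (\ref{heholin}), is the technical heart of the argument. A parallel route would be to establish that the generating series $\sum_{n\geq 0} x_n z^n$ is rational with denominator $1-(2k^2+8k+4)(z^{2k}-z^{4k})-z^{6k}$; this reformulation is clean but still requires the same core identity as input, and I would expect to tackle it by the same induction.
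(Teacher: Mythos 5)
There is a genuine gap here: your argument is a plan whose decisive step is named but not carried out. In the inductive step, after using the nonlinear relation at index $n+4k-1$ to express $x_{n+6k}$ and substituting the induction hypothesis for $x_{n+6k-1}$, what remains to be shown is an identity of the shape
\begin{equation*}
(2k^2+8k+4)\bigl(x_{n+2k}x_{n+4k-1}-x_{n+2k-1}x_{n+4k}\bigr)
= x_n x_{n+4k-1}-x_{n-1}x_{n+4k}-x_{n+5k-1}-x_{n+5k},
\end{equation*}
and this is not a consequence of the linear relation at earlier indices alone: it involves new $2\times 2$ determinantal combinations of the iterates that must themselves be controlled by auxiliary conserved quantities, proved by a simultaneous induction alongside the main relation. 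You explicitly flag this (``producing such a conserved quantity \ldots is the technical heart of the argument'') and then do not produce it, so the induction does not close. The base case is also incomplete as stated: for general $k$ you need closed-form expressions for $x_{2k+1},\ldots,x_{8k-1}$ as polynomials in $k$ before the verification at $n=0,\ldots,2k-1$ can ``reduce to a polynomial identity in $k$'', and these are not supplied.

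For comparison, the paper does not reprove this statement directly (it is quoted from Heideman and Hogan), but its analogue, Proposition \ref{inits}, is obtained by exactly the route you gesture at: first a general structural result (Theorem \ref{thmlin}) showing that \emph{arbitrary} initial data satisfy a constant-coefficient relation $x_{n+3p}-K(x_{n+2p}-x_{n+p})-x_n=0$ with $K$ a first integral, established via the vanishing of the $4\times4$ determinant $\det\hat{\Psi}_n$ and a monodromy argument; and then an explicit computation of the first $2p$ terms of the all-ones sequence (Theorem \ref{init1s}) to evaluate $K$ on that orbit. Both ingredients --- the existence of the conserved coefficient and the closed forms needed to evaluate it --- are precisely what your proposal leaves open.
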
 %\end{theorem} 
When $N=3$ the recurrence (\ref{myrec}) coincides with (\ref{hhodd}) for $k=1$, and is % included in 
a special case of a more general third order 
recurrence that is shown to be linearisable in \cite{honechapt}. 

In the case of (\ref{myrec}) it turns out that the integer sequence generated by choosing 
all $N$ data to take the value 1 is related to the Fibonacci numbers, which we denote by 
$\fib_n$ (with the convention that $\fib_0=0$, $\fib_1=1$). The analogue of Proposition \ref{hhthm} 
is as follows.  
\begin{proposition}\label{inits}  
For each integer $N\geq 2$, the iterates of recurrence (\ref{myrec})  
with the initial data $x_i=1$, $ i=0,\ldots,p=N-1$,  %also 
satisfy the homogeneous  linear relation 
\beq \label{inik}
x_{n+3p}-(\fib_{2p+2}+p\fib_{2p}-\fib_{2p-2}+1) \big(x_{n+2p}-x_{n+p}\big)-x_n=0 
\qquad \forall n\in{\mathbb Z}.
\eeq 
%for all $n\in{\mathbb Z}$.
\end{proposition}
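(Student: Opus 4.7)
My plan is to obtain Proposition~\ref{inits} as a specialisation of the general linearisation theorem that (per the abstract) will be proved later in the paper: each orbit of the nonlinear recurrence (\ref{myrec}) satisfies a linear relation with constant coefficients. Given the palindromic shape of the coefficients $(1,-\mu,\mu,-1)$ in (\ref{inik}), together with the matching palindromic shape appearing in Proposition~\ref{hhthm} and in the $N=2$ linear relation from the introduction, I expect the general constant-coefficient linear relation to take the form
\begin{equation*}
x_{n+3p} - K(x_{n+2p} - x_{n+p}) - x_n = 0,
\end{equation*}
with $K$ a first integral of the map (presumably arising, as the abstract suggests, from the trace of an associated monodromy matrix). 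The characteristic polynomial would then factor as $(\lambda^p-1)(\lambda^{2p}-(K-1)\lambda^p+1)$, so that the differences $x_{n+p}-x_n$ satisfy a second-order linear recurrence in the shifted index, consistent with the periodic-coefficient linearisation also promised in the abstract.

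Granted this general formula, the proof reduces to an explicit evaluation of $K$ at the all-ones initial data (with $a=1$ implicit, as in Proposition~\ref{hhthm}). Concretely, I would iterate (\ref{myrec}) starting from $x_0=\cdots=x_p=1$ to compute $x_{p+1},\ldots,x_{3p}$, and then read off $K$ either by substituting into the explicit formula for the first integral supplied by the general linearisation, or directly via the shortcut $K = (x_{3p}-1)/(x_{2p}-x_p)$ obtained by evaluating the linear relation at $n=0$. As a sanity check the formula gives $\mu=14$ at $p=2$, matching the Heideman--Hogan coefficient $2k^2+8k+4$ at $k=1$ in Proposition~\ref{hhthm}; the $p=1$ case gives $\mu=5$, and $p=3$ gives $\mu=43$, both of which can be verified by a short direct computation.

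The main obstacle is the combinatorial identification of the resulting number with the Fibonacci expression $\fib_{2p+2}+p\fib_{2p}-\fib_{2p-2}+1$. The linear-in-$p$ coefficient of $\fib_{2p}$ is striking and suggests that the evaluation of $K$ on the all-ones point naturally contains a summation of length $p$, each summand contributing a Fibonacci number. I would therefore expect the final step to proceed by induction on $p$ using the three-term Fibonacci recursion, anchored at the small cases listed above. The identity $\fib_{2p+2}-\fib_{2p-2}=\fib_{2p+1}+\fib_{2p-1}$ may be useful to recast the target in a form better matched to whatever summation arises from the explicit evaluation of the first integral.
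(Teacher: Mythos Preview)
Your overall strategy matches the paper's exactly: Proposition~\ref{inits} is obtained by specialising the general constant-coefficient relation (\ref{Klin}), namely $x_{n+3p}-K(x_{n+2p}-x_{n+p})-x_n=0$, to the all-ones orbit and then identifying the value of the first integral $K$ with the stated Fibonacci expression. Your guess for the shape of the relation is precisely right.

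The gap is in how you propose to evaluate $K$ \emph{as a function of $p$}. Iterating the recurrence gives $x_{p+1},\ldots,x_{3p}$ for any fixed $p$, but does not by itself yield a closed form valid for all $p$; and induction on $p$ cannot work here, since different values of $p$ correspond to recurrences of different orders with no evident relation between their all-ones orbits. The paper fills this gap via Theorem~\ref{init1s} in the Appendix, which establishes the closed forms
\[
x_{p+j}=\fib_{2j}\,p+1,\qquad x_{2p+j}=\fib_{2j}\fib_{2p}\,p^2+\bigl(\fib_{2j}(\fib_{2p+2}+1)-\fib_{2j-2}\fib_{2p}\bigr)p+1
\]
for $0\le j\le p$, proved by induction on $j$ (for fixed $p$) using the telescoping identity $\fib_{2n+2}-\fib_{2n}=\fib_{2n}+\fib_{2n-2}+\cdots+\fib_2+1$. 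Once these are in hand, your shortcut at $n=0$ actually finishes more cleanly than the paper's own computation: taking $j=p$ gives $x_{2p}-x_p=\fib_{2p}\,p$ and $x_{3p}-1=\fib_{2p}^2p^2+\fib_{2p}(\fib_{2p+2}+1-\fib_{2p-2})p$, whence $K=\fib_{2p}\,p+\fib_{2p+2}-\fib_{2p-2}+1$ drops out immediately. The paper instead evaluates at generic $j$ and must then invoke the further identity $\fib_{2j}\fib_{2p-2}-\fib_{2j-2}\fib_{2p}=\fib_{2p-2j}$ (proved via the hyperbolic-sine representation of $\fib_{2n}$) to see that the result is independent of $j$; at $j=p$ this identity degenerates to $\fib_0=0$ and is not needed.
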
 %\end{theorem}
\subsection{Outline of the paper}  
Although (\ref{myrec}) cannot be obtained from a cluster algebra, in the next section we briefly explain how it fits into the broader framework of Laurent phenomenon algebras (LP algebras), introduced 
in \cite{lam12}, which immediately shows that the Laurent property holds. In the third section we present Theorem \ref{thmlin}, our main result on the linearisability % relations satisfied by the iterates 
of (\ref{myrec});  
in addition to a linear relation with coefficients that are first integrals (i.e. conserved quantities like $C$ above), the iterates also satisfy a linear relation with periodic coefficients.   
Although the result essentially follows by adapting methods from  \cite{fordyhone12}, the complete proof involves the detailed structure of monodromy matrices 
associated with an integrable Hamiltonian system, namely the periodic dressing chain  of \cite{vesshab93}; this connection is explained in section 4. 
Section 5 contains our conclusions, while the properties of the particular family of integer sequences  in Proposition \ref{inits} are reserved for an appendix.  
%%%%%%%%%%%%%%%%%%%%%%%%%%%%%%

\section{Laurent phenomenon algebras}\label{lpalg}
A seed in a cluster algebra is a pair $({\bf x}, B)$, where ${\bf x}$ is a cluster and $B$ is a skew-symmetrizable integer matrix;  
for each index $j$, an associated exchange polynomial is given by the right hand side of (\ref{exch}). 
%so that for any seed the exchange polynomials are fixed by %completely determined by the matrix $B$.  
In the setting of LP algebras, introduced recently by Lam and Pylyavskyy \cite{lam12},  the exchange polynomials are allowed to be  irreducible polynomials with an arbitrary number of terms, rather than just binomials.
The Laurent phenomenon for certain recurrences of the type (\ref{genform})
with non-binomial $F$ was already observed in Gale's article \cite{gale}, with  Fomin and Zelevinsky's  Caterpillar Lemma eventually providing a 
means to prove the Laurent property in this more general setting \cite{fz02}. The axioms of an LP algebra, which we briefly describe below, guarantee 
that mutations of an initial seed always produce Laurent polynomials in the initial cluster variables.

A seed in an LP algebra of rank $N$ is a pair of $N$-tuples $t=(\textbf{x},\textbf{F})$, where ${\bf x}$ is a cluster, and the entries of $\textbf{F}$ %=(F_1,\ldots,F_N)$ 
are exchange polynomials $F_j\in  {\cal P}:= S[x_1,\ldots,x_N]$, where $S$ is a  coefficient ring. The exchange polynomials must satisfy 
the conditions that (1)  $F_j$ is irreducible in ${\cal P}$ and not divisible by any variable $x_i$, and (2) $F_j$ does not involve the variable $x_j$, for $j=1,\ldots,N$. 
%For a seed $(\textbf{x},\textbf{F})$. 
A  set of exchange Laurent polynomials $\{\hat{F}_1,\ldots,\hat{F}_N\}$ is also defined such that, for all  $j$,  
(i)  there are $\rma_i  \in \mathbb{Z}_{\leq 0}$ for each $i\neq j$ with $\hat{F}_j = F_j \prod_{i\neq j}x_i^{\rma_i}$, and 
(ii) $\hat{F}_i|_{x_j\leftarrow F_j/x} \in S[x_1^{\pm 1},\ldots,x_{j-1}^{\pm 1},x^{\pm 1}, x_{j+1}^{\pm 1},\ldots,  x_N^{\pm 1}]$ and is not divisible by $F_j$.
Mutation in direction $j$ gives $(\textbf{x},\textbf{F})\rightarrow(\textbf{x}',\textbf{F}')$, where the cluster variables of the new seed are given by $x_k'=x_k$ for $k\neq j$ and $x_j'=\hat{F}_j /x_j$.  
For the new exchange polynomials, set 
$$G_k =F_k \Big|_{x_j \leftarrow \frac{\hat{F}_j |_{x_k\leftarrow0}}{x_j'}} $$ 
and define $H_k$ by removing all common factors with $\hat{F}_j|_{x_k\leftarrow 0}$ from $G_k$; then $F_k'=MH_k$ where $M$ is a Laurent monomial in the new cluster variables. Lam and Pylyavskyy proved in \cite[Theorem~5.1]{lam12} that the Laurent phenomenon holds for LP algebras, which means that cluster variables obtained by arbitrary sequences of mutations belong to $S[x_1^{\pm 1}, \ldots ,x_N^{\pm 1}]$. 

To see how each  of  the recurrences (\ref{myrec}) fits into the LP algebra framework, we take the coefficient ring $S=\mathbb{Z}[a]$. 
The initial seed is then given by
$
t_1 = \left\{ (x_1,F_1),\ldots, (x_N,F_N)\right\}
$, 
with the irreducible exchange polynomials 
$$ F_1= x_Nx_2+a\sum\limits_{i=2}^N x_i, \quad F_N   =  x_{N-1}x_1+a\sum\limits_{i=1}^{N-1} x_i, \quad  
F_k=  a+x_{k-1}+x_{k+1}\quad \mathrm{otherwise}.
$$ 
Mutating $t_1$  in direction 1 gives a new seed $t_2=\left\{ (x_{N+1},F_1'),(x_2,F_2'),\ldots (x_N,F_N')\right\}$,  
containing one new cluster variable $x_1'=x_{N+1}$, 
where
$$ 
F_1' = F_1, \quad F_2' = x_{N+1}x_3+a\sum\limits_{i=3}^{N+1}x_i,  \quad  F_N' = a+x_{N-1}+x_{N+1}, \quad  F_k'  =  F_k \quad \mathrm{otherwise}.
$$ 
The seeds $t_1$ and $t_2$ are similar to each other, in the sense of 3.4  in  \cite{lam12} (see also section 7  therein): they are transformed one to the other 
by $(x_1,x_2,\ldots,x_N) \to (x_2,x_3,\ldots, x_{N+1})$, corresponding to a single iteration of (\ref{myrec}), and successive iterations are generated by 
mutating $t_2$ in direction 2, and so forth.  
For the rest of the paper, it will be more convenient to take  $x_0,x_1,\ldots, x_{N-1}$ as initial data, and introduce the ring of Laurent polynomials 
$$ 
{\cal R}_N:= \mathbb{Z}[a,x_0^{\pm 1},\ldots,x_{N-1}^{\pm 1}], 
$$ 
in order to make the following statement. 
\begin{theorem}\label{lpropthm}
For each $N$ the Laurent property holds for (\ref{myrec}), i.e.\ $x_n \in {\cal R}_N  %mathbb{Z}[a,x_0^{\pm 1},\ldots,x_{N-1}^{\pm 1}]$ 
\quad \forall n\in\mathbb{Z}$.
\end{theorem}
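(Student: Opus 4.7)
The plan is to realise the recurrence (\ref{myrec}) as a sequence of mutations in the LP algebra whose initial seed $t_1$ was described just above the theorem, and then to invoke the Laurent phenomenon for LP algebras \cite[Theorem~5.1]{lam12}. Because $t_1$ and $t_2$ are similar in the sense of \cite{lam12} via the relabelling $(x_1,\ldots,x_N) \mapsto (x_2,\ldots,x_{N+1})$ --- which is itself one iteration of (\ref{myrec}) --- iterating this mutation-plus-relabelling (and its inverse for $n<0$) generates every $x_n$ as a cluster variable, and the Laurent phenomenon then places $x_n \in {\cal R}_N$ for all $n \in \mathbb{Z}$.

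What must be checked explicitly is (a) that $t_1$ is a bona fide LP seed, and (b) that mutation of $t_1$ in direction $1$ really does produce the seed $t_2$ listed in the excerpt. For (a), the intermediate polynomials $F_k = a + x_{k-1} + x_{k+1}$, $2 \leq k \leq N-1$, are affine, visibly irreducible, not divisible by any variable, and free of $x_k$. The edge polynomials $F_1$ and $F_N$ are linear in $x_N$ and $x_1$ respectively, with leading and constant terms sharing no common factor in the remaining variables; irreducibility and non-divisibility by any $x_i$ then follow from a short Gauss-style argument. Taking the trivial normalisation $\hat F_j = F_j$ (all $\rma_i = 0$), compatibility condition (ii) reduces, modulo $F_j$, to the statement that $F_j \nmid F_i|_{x_j \leftarrow 0}$ for each pair $i \neq j$ --- a finite collection of elementary divisibility checks handled by inspection of the variables involved.

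For (b), the new cluster variable produced by the mutation is $x_1' = \hat F_1/x_1 = F_1/x_1$, which is precisely the quantity $x_{N+1}$ returned by one application of (\ref{myrec}) to the initial data $x_1,\ldots,x_N$. The new exchange polynomials $F_k'$ are then obtained from the prescription $G_k \to H_k \to F_k' = M H_k$ stated in the excerpt; the verification consists in showing that after cancelling common factors with $\hat F_1|_{x_k \leftarrow 0}$ the Laurent-monomial normaliser $M$ may be taken trivial, so that the resulting $F_k'$ coincide with those listed for $t_2$. Once this first mutation is verified, the similarity of every subsequent seed follows automatically, and an induction on $n$ (with negative $n$ handled by the inverse mutation) delivers the theorem.

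The main obstacle is the mutation bookkeeping in step (b): the polynomials $F_2$ and $F_N$ both contain $x_1$ and therefore genuinely transform under the substitution $x_1 \leftarrow \hat F_1|_{x_k \leftarrow 0}/x_1'$, while $F_1'$ is produced through the $G$-recipe rather than by a clean relabelling. Tracking which monomials cancel against $\hat F_1|_{x_k \leftarrow 0}$, and confirming that no hidden irreducible factor inflates $M$, is where the mechanical effort lies; nothing here is conceptually deep, but the edge-polynomial cases $k=1,2,N-1,N$ must each be handled individually.
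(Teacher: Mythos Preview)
Your proposal is correct and follows essentially the same route as the paper: the argument in Section~\ref{lpalg} consists precisely of setting up the seed $t_1$, asserting that mutation in direction $1$ yields the similar seed $t_2$ (so that the recurrence is realised by successive LP mutations), and then invoking \cite[Theorem~5.1]{lam12}. The paper does not spell out the irreducibility checks or the mutation bookkeeping that you flag in steps (a) and (b); your identification of the edge cases $k=1,2,N-1,N$ as the places requiring individual care is accurate, and filling those in would make the argument more complete than what appears in print.
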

Working in the ambient field of fractions $\mathbb{Q}(a,x_0,\ldots ,x_{N-1})$, 
it is also worth noting  that, from the form  of (\ref{myrec}), the iterates $x_n$ can be written 
as subtraction-free rational expressions in $a$ and the initial data, and hence are all non-vanishing.  
%
%%%%%%%%%%%%%%%%%%%%%%%%%%%%%%%%%%%%%%%%%%%%%%%%%%%%%%%%%%%%%%%%%%%
\section{Linearisability}

In order to show that the iterates of 
(\ref{myrec}) satisfy linear relations, we begin by noting 
that the  nonlinear recurrence can be rewritten using a $2\times 2$ determinant as
\begin{equation}\label{2det} 
D_n=
a\sum\limits_{i=1}^{N-1} x_{n+i} , \qquad 
\mathrm{where} \quad 
D_n :=   
\left|
\begin{matrix}
x_n & x_{n+N-1} \\
x_{n+1} & x_{n+N}
\end{matrix}
\right| 
,
\end{equation}
and proceed to consider the $3\times3$ matrix
\begin{equation}\label{psi3}
\Psi_n = 
\begin{pmatrix} 
x_n & x_{n+N-1} & x_{n+2N-2} \\
x_{n+1} & x_{n+N} & x_{n+2N-1} \\
x_{n+2} & x_{n+N+1} & x_{n+2N}
\end{pmatrix}.
\end{equation} 
Iterating (\ref{myrec}) with initial data $(x_0,x_1, \ldots ,x_{N-1})$ 
is equivalent to iterating the birational map 
\beq\label{bir}
\varphi: \qquad 
(x_0,x_1, \ldots ,x_{N-1}) \mapsto 
\left( x_1,x_2,\ldots , %\left( 
\Big( 
x_{N-1} x_1 +a\sum_{i=1}^{N-1}
\Big) 
%\right) 
/x_0 \right) .    
\eeq 
The key to obtaining linear relations for the variables $x_n$ is the observation 
that the determinant of the matrix $\Psi_n$ is invariant under $n\to n+1$, meaning 
that it provides a conserved quantity for the map $\varphi$. 
\begin{lemma}\label{psi_inv}
The determinant of the matrix $\Psi_n$ is invariant under the map $\varphi$ defined by the recurrence (\ref{myrec}). 
In other words, if it is written 
in terms of the initial data %$x_0,x_1,\ldots,x_{N-1}$ 
as 
\begin{equation}\label{mu} 
\det \Psi_n=\mu(x_0,x_1,\ldots,x_{N-1}), 
\end{equation}
then the function $\mu$ is a first integral, i.e. it satisfies $\varphi^* \mu = \mu$. %for all $n$.
\end{lemma}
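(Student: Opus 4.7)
The plan is to compress $\det\Psi_n$ into a compact scalar identity via a standard determinantal trick, and then verify shift-invariance by direct algebraic manipulation.

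The first key observation is that the four corner $2\times 2$ minors of the matrix $\Psi_n$ in (\ref{psi3}) are precisely the quantities $D_n$, $D_{n+1}$, $D_{n+N-1}$, and $D_{n+N}$ appearing in (\ref{2det}), while its central entry is $x_{n+N}$. Applying the Desnanot--Jacobi (Dodgson condensation) identity to the $3\times 3$ matrix $\Psi_n$ therefore yields the compact formula
\[
x_{n+N}\,\det\Psi_n \;=\; D_n\,D_{n+N} \;-\; D_{n+1}\,D_{n+N-1}.
\]
It is this expression, rather than the full $3\times 3$ determinant, that I would manipulate in the rest of the proof.

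Since $D_k = a\sum_{i=1}^{N-1} x_{k+i}$ by (\ref{2det}), establishing $\varphi^*\mu = \mu$ reduces to the single polynomial identity
\[
x_{n+N+1}\bigl(D_n D_{n+N} - D_{n+1}D_{n+N-1}\bigr) \;=\; x_{n+N}\bigl(D_{n+1}D_{n+N+1} - D_{n+2}D_{n+N}\bigr).
\]
Using the telescoping relation $D_{k+1} - D_k = a(x_{k+N}-x_{k+1})$, each side of this identity expands as a polynomial in the $x_{n+i}$ and the $D_k$. Substituting the recurrence (\ref{myrec}) at indices $n+1$ and $n+N$ (in the equivalent forms $x_{n+1}x_{n+N+1} = x_{n+2}x_{n+N} + D_{n+1}$ and $x_{n+N}x_{n+2N} = x_{n+N+1}x_{n+2N-1} + D_{n+N}$) then causes the two sides to collapse to the same value, namely $x_{n+N}\,x_{n+N+1}\,(x_{n+1}-x_{n+2N-1})$.

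The main obstacle is the bookkeeping in this final step: although no deep input is required, one has to recognise which two applications of (\ref{myrec}) trigger the cancellation. A more conceptual alternative, foreshadowed in the paper's outline, is to identify $\det\Psi_n$ with a coefficient of the characteristic polynomial of a monodromy matrix for the periodic dressing chain (Section~4), whence $\varphi$-invariance follows immediately from cyclicity of such coefficients.
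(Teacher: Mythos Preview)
Your approach is essentially the paper's own: apply Dodgson condensation to write $x_{n+N}\det\Psi_n = D_nD_{n+N}-D_{n+1}D_{n+N-1}$, then use the telescoping relation $D_{k+1}-D_k=a(x_{k+N}-x_{k+1})$ together with the determinantal definition of $D_k$ to verify $\det\Psi_{n+1}=\det\Psi_n$ directly. The paper organises this by simplifying each of $\det\Psi_0$ and $\det\Psi_1$ separately (reducing each from a quadratic to a linear expression in the $D_k$) and then taking the difference; your cross-multiplied form is equivalent.

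The one concrete error is your claimed common value. Neither side of your displayed identity collapses to $x_{n+N}x_{n+N+1}(x_{n+1}-x_{n+2N-1})$: on degree grounds alone this cannot be right, since $D_jD_k$ carries a factor $a^2$ when expressed via (\ref{2det}). If you follow your own recipe (telescope to eliminate $D_n,D_{n+N-1}$ on the left and $D_{n+2},D_{n+N+1}$ on the right, then substitute $x_{n+N}x_{n+2N}=x_{n+N+1}x_{n+2N-1}+D_{n+N}$ and $x_{n+2}x_{n+N}=x_{n+1}x_{n+N+1}-D_{n+1}$ on the right), both sides in fact reduce to
\[
a\,x_{n+N+1}\Big((x_{n+1}-x_{n+N})D_{n+N}-(x_{n+N}-x_{n+2N-1})D_{n+1}\Big),
\]
which is exactly the form appearing in the paper's calculation. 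So the strategy is sound; only the asserted endpoint needs correcting.
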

\begin{proof}
To begin with, we note that the identity 
\beq\label{did} 
D_{n+1}-D_n = a(x_{n+N}-x_{n+1})  
\eeq 
follows from (\ref{2det}).  
We wish to show that $\det \Psi_{n+1} = \det \Psi_{n}$ for all $n$. 
However, since all formulae 
remain valid under shifting each of the indices by an arbitrary amount, 
it is sufficient to perform the calculation with $n=0$ for ease of notation.  
Using the method of Dodgson condensation \cite{dodgson} to  expand 
the $3\times 3$ determinant in terms of $2 \times 2$ minors, we calculate 
\begin{eqnarray*}
\det \Psi_0  &= & \frac{1}{x_{N}} 
\left| 
\begin{array}{cc} 
 D_0 & D_{N-1} \\
D_{1} & D_{N} 
\end{array} 
\right| 
=
%\\   & = & 
\frac{1}{x_{N}} 
\left| 
\begin{array}{cc} 
 D_1 +a(x_1-x_N) & D_{N}+a(x_N-x_{2N-1})  \\
D_{1} & D_{N} 
\end{array} 
\right| 
\\    
&= &  
\frac{a}{x_{N}}\Big(  
D_N(x_1-x_N)-D_1(x_N-x_{2N-1})
\Big) , 
\end{eqnarray*}
where we have used (\ref{did}). 
Similarly, by using (\ref{did}) again, we have %shifting indices and rearranging sums, 
%$$ 
\begin{eqnarray*} 
\det \Psi_1  &= & \frac{1}{x_{N+1}} 
\left| 
\begin{array}{cc} 
 D_1 & D_{N} \\
D_{2} & D_{N+1} 
\end{array} 
\right|  
=
%\\   & = & 
\frac{1}{x_{N}} 
\left| 
\begin{array}{cc} 
 D_1 & D_{N} \\
D_{1} +a(x_{N+1}-x_2)  & D_{N} +a(x_{2N}-x_{N+1}) 
\end{array} 
\right| 
\\    
&= &  
\frac{a}{x_{N}}\Big(  
D_1(x_{2N}-x_{N+1}) -D_N(x_{N+1}-x_2) 
\Big)
. 
\end{eqnarray*}
%$$
Taking the difference and   multiplying by a common denominator 
%using (\ref{2det}) 
 gives 
\begin{eqnarray*}
\frac{x_N x_{N+1}} {a} \left( \det\Psi_1-\det\Psi_0 \right)& = & 
D_N\Big( -x_N (x_{N+1}-x_{2}) -x_{N+1}(  x_{1}-x_{N} ) 
\Big)   \\ 
&& 
+ D_1\Big( x_N (x_{2N}-x_{N+1}) +x_{N+1}( x_{N}-  x_{2N-1}) 
\Big)   \\ 
& = &  -D_N D_1   + D_1D_N =0, 
\end{eqnarray*} 
as required. 
Hence the determinant of the $3 \times 3$ matrix $\Psi_n$ is a conserved quantity (independent of $n$). 
Starting from the matrix (\ref{psi3}) with $n=0$, $\det \Psi_0$ 
can be rewritten as a rational function of the initial values $x_0, \ldots ,x_{N-1}$, 
denoted $\mu$  as in (\ref{mu}),  
by repeatedly using (\ref{myrec}) to express $x_{2N}$ as a rational function of 
terms of lower index, then $x_{2N-1}$, etc. By construction, the pullback 
of this function satisfies $\varphi^* \mu =\mu \cdot \varphi = \mu$,   
so $\mu$ is a first integral for the map (\ref{bir}).  
\end{proof}

%%% For general initial data the determinant of matrix $\Psi_n$ is generically nonzero. % DO WE NEED THIS? 
%%% YES: for proof of linear relations!!! 

\begin{remark}\label{munon} 
The first integral $\mu$ is a rational function, or more precisely a Laurent polynomial, 
in terms of the variables $x_0,\ldots ,x_{N-1}$. To verify that it is not identically zero, 
it is enough to check that one specific choice of initial values gives a non-vanishing 
value of $\mu$. In particular, for the sequence beginning with all $N$ initial values equal to 1, 
the value of $\mu$ can be calculated by using the formulae (\ref{terms}) found in 
the Appendix, which gives 
%\mu = 
$\mu (1,1,\ldots ,1) =\fib_{2p}p^2\neq 0$ (where $p=N-1$).    
%{init1s} 
\end{remark} 

\begin{corollary} \label{cor4det} 
If the sequence $(x_n)$ satisfies (\ref{myrec}), 
then $\det\hat{\Psi}_n =0$, where 
%the $4\times4$ matrix 
\begin{equation}\label{psi4}
\hat{\Psi}_n = 
\begin{pmatrix}
x_n & x_{n+N-1} & x_{n+2N-2} & x_{n+3N-3} \\
x_{n+1} & x_{n+N} & x_{n+2N-1} & x_{n+3N-2}\\
x_{n+2} & x_{n+N+1} & x_{n+2N} & x_{n+3N-1} \\
x_{n+3} & x_{n+N+2} & x_{n+2N+1} & x_{n+3N}
\end{pmatrix}
.
\end{equation}
%has vanishing determinant. 
\end{corollary}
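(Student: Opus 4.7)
The plan is to apply the Desnanot--Jacobi identity (Dodgson condensation) to the $4\times 4$ matrix $\hat\Psi_n$, using Lemma \ref{psi_inv} to evaluate its four corner $3\times 3$ minors.

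First I would check that each submatrix of $\hat\Psi_n$ obtained by deleting row $i$ and column $j$, for $(i,j)\in\{(1,1),(1,4),(4,1),(4,4)\}$, is itself a matrix of the form $\Psi_m$. Since the columns of $\hat\Psi_n$ are spaced by $N-1$ in their starting index and the rows by $1$, removing an outer row together with the corresponding outer column preserves both of these spacings; a short index computation identifies the four corner minors as $\Psi_{n+N}$, $\Psi_{n+1}$, $\Psi_{n+N-1}$ and $\Psi_n$ respectively. By Lemma \ref{psi_inv} each of these determinants equals the single first integral $\mu$.

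Next, the central $2\times 2$ minor of $\hat\Psi_n$, obtained by deleting rows $1,4$ and columns $1,4$, is
$$\begin{pmatrix} x_{n+N} & x_{n+2N-1} \\ x_{n+N+1} & x_{n+2N} \end{pmatrix},$$
whose determinant is precisely $D_{n+N}$ in the notation of (\ref{2det}), equal to the subtraction-free expression $a\sum_{i=1}^{N-1}x_{n+N+i}$; this is nonzero in the field of fractions $\mathbb{Q}(a,x_0,\ldots,x_{N-1})$, as noted after Theorem \ref{lpropthm}. The Desnanot--Jacobi identity then gives
$$\det\hat\Psi_n\cdot D_{n+N} \; =\; \mu\cdot\mu-\mu\cdot\mu \; =\; 0,$$
so dividing through by the nonzero $D_{n+N}$ yields $\det\hat\Psi_n=0$, as claimed.

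The only non-routine step is the combinatorial identification of each corner $3\times 3$ minor with a shifted copy of $\Psi_m$; once this is recognised, the vanishing is immediate from Dodgson's identity together with the nonvanishing of the central minor $D_{n+N}$. No direct expansion of the $4\times 4$ determinant is required.
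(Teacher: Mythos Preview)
Your proof is correct and follows essentially the same approach as the paper: both apply Dodgson condensation (the Desnanot--Jacobi identity) to $\hat\Psi_n$, identify the four corner $3\times 3$ minors as $\det\Psi_n$, $\det\Psi_{n+1}$, $\det\Psi_{n+N-1}$, $\det\Psi_{n+N}$ and the central $2\times 2$ minor as $D_{n+N}$, and then invoke Lemma~\ref{psi_inv} together with the nonvanishing of $D_{n+N}$ to conclude. Your identification of the minors and the justification that $D_{n+N}\neq 0$ are accurate; the paper's version simply states the condensation formula and the result more tersely.
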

\begin{proof} 
Note that from (\ref{2det}) it follows that $D_n$ is a subtraction-free rational 
expression in the initial data, hence is non-vanishing for all $n$. 
Applying Dodgson condensation again 
%%% , in order to  expand the determinant of the matrix $\hat{\Psi}_n$ 
%%% in terms of $3\times 3$ minors and the central $2\times 2$ minor, %determinants 
yields
\begin{eqnarray*}
\det\hat{\Psi}_n
& = & \frac{\det\Psi_n\det\Psi_{n+N} - \det\Psi_{n+1}\det\Psi_{n+N-1}} 
{D_{n+N}} =0, 
\end{eqnarray*}
since,  by Lemma \ref{psi_inv},  $\det \Psi_n$ is independent of $n$.  
\end{proof} 
The fact that the $4\times4$ matrix in (\ref{psi4}) 
has a non-trivial kernel is enough to show that sequences  
$(x_n)$ generated by (\ref{myrec}) also satisfy linear recurrence relations.  
There are two types of relation, corresponding to the kernel of 
$\hat{\Psi}_n$, and to the kernel of its transpose, respectively. 
In order to describe these relations more explicitly, 
it is helpful to introduce the %sequence of 
functions 
\begin{equation}\label{Jfn}
J_n = \frac{x_{n+2}+x_n+a}{x_{n+1}}, 
\end{equation}
which turn out to be periodic with respect to $n$. 
%This set of functions is needed to state the following result.
%
%
\begin{theorem}\label{thmlin}
The iterates of the recurrence (\ref{myrec}) satisfy the pair of homogeneous linear relations 
\begin{equation}\label{Klin}
x_{n+3N-3}-K(x_{n+2N-2}-x_{n+N-1})-x_n = 0, 
\end{equation}
\begin{equation}\label{Jlinear}
x_{n+3}-(1+J_{n+1})x_{n+2}+(1+J_n)x_{n+1}-x_n = 0, 
\end{equation}
where $K$ is a first integral (independent of $n$), and 
the coefficients of (\ref{Jlinear}) are periodic, such that  $J_{n+N-1}=J_n$ for all $n$. 
\end{theorem}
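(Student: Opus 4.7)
The plan is to handle the two linear relations separately, starting with (\ref{Jlinear}), which turns out to be essentially tautological given the definition of $J_n$.

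First I would observe that with $J_n=(x_{n+2}+x_n+a)/x_{n+1}$, one has
$$J_{n+1}x_{n+2}-J_n x_{n+1}=(x_{n+3}+x_{n+1}+a)-(x_{n+2}+x_n+a)=(x_{n+3}-x_{n+2})+(x_{n+1}-x_n),$$
and rearranging this directly yields $x_{n+3}-(1+J_{n+1})x_{n+2}+(1+J_n)x_{n+1}-x_n=0$ as an algebraic identity. So the only genuine content for this half of the theorem is the periodicity $J_{n+N-1}=J_n$. After clearing denominators, this reduces to the identity
$$x_{n+1}\bigl(x_{n+N+1}+x_{n+N-1}+a\bigr)=x_{n+N}\bigl(x_{n+2}+x_n+a\bigr),$$
which I would prove by subtracting (\ref{myrec}) at index $n$ from (\ref{myrec}) at index $n+1$. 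The telescoping of the two sums $a\sum_{i=1}^{N-1}x_{n+i}$ and $a\sum_{i=1}^{N-1}x_{n+1+i}$ contributes exactly $a(x_{n+N}-x_{n+1})$, which is precisely the $a$-dependent difference of the two sides of the displayed identity.

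For (\ref{Klin}), the plan is to exploit the singular $4\times 4$ matrix $\hat{\Psi}_n$ supplied by Corollary \ref{cor4det}. Its top $3\times 3$ block is $\Psi_n$ and its bottom $3\times 3$ block is $\Psi_{n+1}$, both with determinant $\mu\neq 0$ by Lemma \ref{psi_inv} and Remark \ref{munon}; hence $\hat{\Psi}_n$ has rank exactly $3$ and one-dimensional kernel. Expanding $\det\hat{\Psi}_n=0$ along the first row gives
$$x_n\, M_{11}-x_{n+N-1}\, M_{12}+x_{n+2N-2}\, M_{13}-x_{n+3N-3}\, M_{14}=0,$$
where the $M_{1j}$ are the $3\times 3$ minors obtained by deleting row $1$ and column $j$. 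A direct identification of index patterns shows $M_{11}=\det\Psi_{n+N}=\mu$ and $M_{14}=\det\Psi_{n+1}=\mu$. The crucial step, which I expect to be the main obstacle, is to prove $M_{12}=M_{13}$, i.e.\ that the two interior cofactors coincide. I would attempt this either by a Dodgson-condensation calculation in the spirit of Lemma \ref{psi_inv}, writing each $M_{1j}$ in terms of $2\times 2$ minors and exploiting the identity $D_{n+1}-D_n=a(x_{n+N}-x_{n+1})$, or, if that becomes unwieldy, by appealing to the monodromy-matrix structure of the periodic dressing chain developed in Section 4.

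Once $M_{12}=M_{13}$ is established, the common value $K:=M_{12}/\mu=M_{13}/\mu$ converts the first-row expansion into (\ref{Klin}). The first-integral property $\varphi^* K=K$ is then automatic: solving (\ref{Klin}) gives $K=(x_{n+3N-3}-x_n)/(x_{n+2N-2}-x_{n+N-1})$, and since (\ref{Klin}) holds at $n+1$ with the same coefficient, the pullback $\varphi^*K$ must equal $K$. A final sanity check that this denominator is not identically zero, so that $K$ is a bona fide rational function on phase space, can be verified by specialising to the integer sequence of Proposition \ref{inits}, in the spirit of Remark \ref{munon}.
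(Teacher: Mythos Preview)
Your proposal is essentially correct and follows the same route as the paper. The treatment of (\ref{Jlinear}) is identical: differencing two consecutive instances of (\ref{myrec}) yields $J_{n+N-1}=J_n$, and the relation itself is the forward difference of $x_{n+2}-J_nx_{n+1}+x_n+a=0$. For (\ref{Klin}), your cofactor expansion of $\det\hat\Psi_n=0$ along the first row is equivalent to the paper's kernel-vector argument, since the signed minors $(M_{11},-M_{12},M_{13},-M_{14})^T$ span the one-dimensional kernel of $\hat\Psi_n$; the paper likewise arrives at (\ref{Klinear}) with two a priori distinct coefficients $K^{(1)},K^{(2)}$ and, exactly as you anticipate, defers the equality $K^{(1)}=K^{(2)}$ (your $M_{12}=M_{13}$) to the $2\times 2$ monodromy argument of Section~4 rather than attempting a direct determinantal calculation.

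One small point to tighten: your claim that ``(\ref{Klin}) holds at $n+1$ with the same coefficient'' is not quite automatic from a single row expansion. The paper secures $n$-independence by observing that the last three rows of $\hat\Psi_n\mathbf{w}_n=0$ form the invertible system $\Psi_{n+1}(A_n,B_n,C_n)^T=(\ldots)$, which coincides with the first three rows of $\hat\Psi_{n+1}\mathbf{w}_{n+1}=0$, forcing $\mathbf{w}_{n+1}=\mathbf{w}_n$. You should insert this overlap argument (or the equivalent observation that all columns of the adjugate are proportional, so the same normalized kernel vector annihilates every row of $\hat\Psi_n$) before invoking $\varphi^*K=K$.
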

\begin{proof} 
From Corollary \ref{cor4det}, the existence of the two linear relations follows by arguments which are almost identical 
to those used for the proof of Lemma 5.1 in  \cite{fordyhone12}. We first sketch the essential argument for (\ref{Klin}), 
before discussing further details %the structure 
of the coefficients in each relation.  

Without loss of generality, a non-zero vector ${\bf w}_n$  in the kernel of $\hat{\Psi}_n$ can be written 
in normalised form as $\textbf{w}_n=(A_n,B_n,C_n,-1)^T$. Indeed, the last entry cannot 
vanish, for otherwise there would be a non-trivial kernel for the $3\times 3$ matrix $\Psi_n$ given by (\ref{psi3}),  
contradicting the remark that $\mu$  is not identically zero.  
The first three rows of the equation $\hat{\Psi}_n\textbf{w}_n=0$ produce the linear system
\begin{equation}\label{3ker}
\Psi_n 
\begin{pmatrix}
A_n \\
B_n \\
C_n
\end{pmatrix}
=
\begin{pmatrix}
x_{n+3N-3} \\
x_{n+3N-2} \\
x_{n+3N-1}
\end{pmatrix}, 
\end{equation}
while the last three rows of the same equation give 
\begin{equation}\label{3kernext}
\Psi_{n+1} \begin{pmatrix}
A_n \\
B_n \\
C_n
\end{pmatrix}
=
\begin{pmatrix}
x_{n+3N-2}\\
x_{n+3N-1} \\
x_{n+3N}
\end{pmatrix}.
\end{equation} 
Each of the  systems (\ref{3ker}) and (\ref{3kernext}) 
can be solved for the vector $(A_n,B_n,C_n)^T$, and the 
two different expressions that result differ by a shift $n\to n+1$, 
which implies that %the components of 
this vector 
must be independent of $n$. 
Upon applying Cramer's rule to the system (\ref{3ker}), the first component can be simplified as  
%$$A_n =\frac{\det\Psi_{n+N-1}}{\det\Psi_n} = 1, $$
$A_n =  \det\Psi_{n+N-1}/\det\Psi_n = 1 $, 
by Lemma \ref{psi_inv}. 
Supposing that they are not trivially constant, the other two components are 
first integrals, so we may set $K^{(1)}=C_n$ and $K^{(2)}=-B_n$, 
and then the equation for ${\bf w}_n$ gives the relation 
\begin{equation}\label{Klinear}
x_{n+3N-3}-K^{(1)}x_{n+2N-2}+K^{(2)}x_{n+N-1}-x_n =0,
\end{equation}
where the coefficients are all independent of $n$. 

By considering the kernel of $\hat{\Psi}_n^T$, an almost identical argument shows that 
$(x_n)$ also satisfies a four-term linear relation with coefficients that are periodic with period $N-1$. 
However, this four-term relation can be obtained  more directly in another way.  
Upon setting 
%$$
$ 
E_n:=x_{n+N}x_n-x_{n+N-1}x_{n+1}-a (x_{n+1}+\ldots +x_{n+N-1})$,  %\sum\limits_{i=1}^{N-1}x_{n+i}$, %$$  
which vanishes for all $n$ whenever (\ref{myrec}) holds, 
%by  %more directly by introducing 
taking the forward difference $E_{n+1}-E_n=0$ and rearranging yields
\begin{equation}\label{Jeqn}
\frac{x_{n+N+1}+x_{n+N-1}+a}{x_{n+N}} = \frac{x_{n+2}+x_n+a}{x_{n+1}}. %, \qquad n=0,\ldots,N-2.
\end{equation}
Comparison of the above identity with (\ref{Jfn}) reveals that the sequence of quantities $J_n$ 
is periodic with period $p=N-1$: the left hand side of (\ref{Jeqn})  is $J_{n+p}$, and the right hand side is $J_n$. 
Now rearranging %the expression 
(\ref{Jfn}) %for $J_n$ 
clearly gives an inhomogeneous linear equation, namely % that is 
\begin{equation}\label{Jarrange}
x_{n+2}-J_nx_{n+1}+x_n +a = 0.
\end{equation}
%Applying 
The forward difference operator 
applied 
to the latter equation 
produces %provides the homogeneous relation 
(\ref{Jlinear}). 
\end{proof} 
As we shall see, the coefficients $K$ and $J_n$ are themselves Laurent polynomials in the initial data, 
meaning that the each of the linear relations (\ref{Klinear}) and (\ref{Jlinear}) offers an alternative proof of Theorem \ref{lpropthm}. 
The periodicity of the quantities (\ref{Jfn}) immediately implies the following. 
\begin{corollary} 
For all $n\in\mathbb{Z}$ the iterates of (\ref{myrec}) satisfy $x_n\in\mathbb{Z}[a,x_0,x_1,J_0,J_1,\ldots , J_{N-2}]$.  
\end{corollary}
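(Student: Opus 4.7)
The plan is to exploit the inhomogeneous three-term recurrence (\ref{Jarrange}) that was just established in the course of proving Theorem \ref{thmlin}, together with the periodicity $J_{n+p}=J_n$ (with $p=N-1$) of its coefficients. Rearranging (\ref{Jarrange}) as
\[
x_{n+2} = J_n\, x_{n+1} - x_n - a
\]
expresses $x_{n+2}$ as an integer-coefficient polynomial in $x_{n+1}$, $x_n$, $a$ and $J_n$. By the periodicity, for every $n\in\mathbb{Z}$ the coefficient $J_n$ coincides with one of $J_0,J_1,\ldots,J_{N-2}$.

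From here a two-sided induction finishes the argument. The base cases $x_0,x_1\in\mathbb{Z}[a,x_0,x_1,J_0,\ldots,J_{N-2}]$ are trivial. Assuming inductively that both $x_n$ and $x_{n+1}$ lie in this ring for some $n\geq 0$, the display above places $x_{n+2}$ there as well, since $J_n$ is one of the adjoined generators; hence $x_n$ belongs to the ring for all $n\geq 0$. For $n\leq -1$ I would run the same three-term recurrence backwards in the form $x_n = J_n x_{n+1} - x_{n+2} - a$, starting from $x_0,x_1$ and again invoking periodicity to rewrite each coefficient $J_n$ as a member of $\{J_0,\ldots,J_{N-2}\}$.

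I do not expect any genuine obstacle here: the corollary is essentially a bookkeeping consequence of Theorem \ref{thmlin}. The only point worth emphasising is that the periodicity $J_{n+p}=J_n$ is precisely what keeps the coefficient ring from growing with $n$, so that adjoining the finitely many generators $J_0,\ldots,J_{N-2}$ is enough to describe every iterate simultaneously. Viewed this way, the upgrade from \emph{Laurent} polynomial in the $x_i$ (Theorem \ref{lpropthm}) to \emph{honest} polynomial in the $x_i$ together with the $J_j$ is exactly what is gained by passing to the new generators $J_0,\ldots,J_{N-2}$.
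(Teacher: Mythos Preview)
Your argument is correct and is precisely the approach the paper has in mind: the paper states only that ``the periodicity of the quantities (\ref{Jfn}) immediately implies'' the corollary, and the details you supply---rewriting (\ref{Jarrange}) as $x_{n+2}=J_n x_{n+1}-x_n-a$ and inducting in both directions using $J_{n+p}=J_n$---are exactly what makes that immediacy explicit.
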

Note that the latter result gives a stronger 
version of the Laurent property for the nonlinear recurrence (\ref{myrec}).  Indeed, it is clear from the formula (\ref{Jfn}) that $J_n\in{\cal R}_N$ for $n=0,\ldots, N-3$, 
while (using (\ref{myrec}) to substitute for $x_N$) the same formula  for $n=N-2$  gives %can be used to obtain 
$$
J_{N-2} = \frac{x_{N-2}}{x_{N-1}} + \frac{x_1}{x_0} + \frac{a}{x_{N-1}x_0}\sum\limits_{i=0}^{N-1} x_{i} 
\in {\cal R}_N . 
$$

Observe that the foregoing proof of Theorem \ref{thmlin} 
is not yet complete: so far we have not shown that the coefficients  $K^{(1)}$ and $K^{(2)}$ in (\ref{Klinear}) coincide, 
which is required for the relation (\ref{Klin}) to hold. 
Cramer's rule applied to the linear system (\ref{3ker}) produces formulae for these coefficients as ratios of $3\times 3$ determinants, 
but it is not apparent from these formulae why  $K^{(1)} = K^{(2)}=K$.  An explanation for this coincidence is postponed until the next section, 
but first we present an example.  

\begin{example}\label{N3} 
For $N=3$ the nonlinear recurrence (\ref{myrec}) is %Consider the third order example 
\begin{equation}\label{3rdmyrec}
x_{n+3}x_n = x_{n+2}x_{n+1} +a (x_{n+2}+x_{n+1}),
\end{equation}
which is the same as % the third order 
Heideman and Hogan's recurrence (\ref{hhodd}) for $k=1$ 
and $a=1$, and is also a special case of the  third order recurrence 
considered in section 5 of  
\cite{honechapt}. 
In accordance with (\ref{Klinear}), the iterates of (\ref{3rdmyrec}) satisfy 
$
x_{n+6}-K(x_{n+4}-x_{n+2}) - x_n =0$, 
where the first integral $K$ can be expressed as a Laurent polynomial in the initial data $x_0,x_1,x_2$ as follows:
%\begin{eqnarray}
\beq \label{constK}
K  %&= &  
=  %% 
1+ \frac{x_0}{x_2} +\frac{x_2}{x_0} + a\left(\frac{x_0}{x_1x_2} + \frac{x_2}{x_0x_1}+  
2\sum_{i=0}^2\frac{1}{x_i} 
%\frac{2}{x_0}+\frac{2}{x_1}+\frac{2}{x_2} 
\right) 
% \nonumber \\
%&+&  
+ %% 
a^2 \left(\frac{1}{x_0x_1}+\frac{1}{x_1x_2}+\frac{1}{x_2x_0}\right) 
.
\eeq 
This agrees with the result of Theorem 5.1 in \cite{honechapt}, 
%\end{eqnarray}
and with the particular integer sequence considered in 
\cite{heidhogan08}: 
putting the initial data $x_0=x_1=x_2=1$ with $a=1$ into (\ref{constK}) gives the constant value $K=14$, in accordance with %. This is the same as the constant appearing in Heideman and Hogan's 
%the linear relation for $k=1$ in 
Proposition \ref{hhthm} for $k=1$. 
 
The iterates of (\ref{3rdmyrec}) also satisfy the homogeneous relation (\ref{Jlinear}) with period 2 coefficients %, where 
\beq\label{N3Js} 
J_0 = \frac{x_2 + x_0 +a }{x_1}, \qquad J_1 = \frac{x_{1}x_0 + x_{2}x_1 + ax_0+ ax_1 +ax_2}{x_{2}x_0}.  
%\quad \mathrm{with}\quad K=J_0 J_1 -1. 
\eeq 
%and we have the identity $K=J_0 J_1 -1$. 
\end{example}

%%%%%%%%%%%%%%%%%%%%%%%%%%%%%%%%%%
\section{Monodromy and the dressing chain}

Any function of the periodic quantities $J_0,J_1,\ldots ,J_{N-2}$ 
that is invariant under cyclic permutations provides a first integral 
for the $N$-dimensional map (\ref{bir}) corresponding to (\ref{myrec}). %(or, equivalently, 
The $N-1$ quantities $J_i$ are independent 
functions of $x_0,x_1, \ldots , x_{N-1}$, so a choice of $N-1$ independent cyclically symmetric functions of these $J_i$ gives 
the maximum number of independent first integrals 
for such a map  in $N$ dimensions, as long as it is not purely periodic. Hence the coefficients in (\ref{Klinear}) should
be functions of the $J_i$, and writing them explicitly as such will allow us to show that $K^{(1)} = K^{(2)}$, by 
applying monodromy arguments. 

\subsection{$3\times 3$ monodromy matrices}\label{mono3}

From the linear relation (\ref{Jlinear}), the matrix $\Psi_n$ satisfies
\begin{equation} %\label{monL}
\Psi_{n+1} = \lax_n \Psi_n , 
%\end{equation} 
\qquad \mathrm{where} \qquad 
%\begin{equation} 
\label{Lmat}
\lax_n =
\begin{pmatrix}
0 & 1 & 0 \\
0 & 0 & 1 \\
1 & -1-J_n & 1+J_{n+1}
\end{pmatrix}.
\end{equation}
Shifting the indices $N-2$ times  in the above linear equation for $\Psi_n$ gives the monodromy over the period 
of the coefficients, %matrix $M_n$ 
such that
$$
\Psi_{n+p}=\mma_n\Psi_n
%$$ where $$ 
\qquad \mathrm{with} \qquad 
\mma_n = \lax_{n+N-2}  \lax_{n+N-3}\cdots %L_{n+1}
\lax_n, 
$$
where $p=N-1$ is the period.  
From the second linear relation (\ref{Klinear}), the matrix $\Psi_n$ also satisfies
\begin{equation}\label{Lhat}
\Psi_{n+p} = \Psi_n \hat{\lax}, 
\qquad \mathrm{where} \qquad  
\hat{\lax} =
\begin{pmatrix}
0 & 0 & 1 \\
1 & 0 & -K^{(2)} \\
0 & 1 & K^{(1)}
\end{pmatrix}.
\end{equation}
By Remark \ref{munon}, $\Psi_n$ is invertible, so 
we can rewrite these two monodromy equations as
$
\mma_n = \Psi_{n+p}\Psi_n^{-1} 
$
and
$
\hat{\lax} = \Psi_n^{-1}\Psi_{n+p},
$
%assuming generic initial conditions such that $\det \Psi_n=\mu(x_0,\ldots,x_{N-1})\neq 0$.
and then taking the trace of each yields %is invariant under cyclic permutations we know that
\begin{equation}\label{treqn}
 K^{(1)}= \tr \hat{\lax} =\tr \mma_n  \qquad \mathrm{for} \,\, \mathrm{all} \,\, n. 
\end{equation}
Thus we can write the invariant $K^{(1)}$ in terms of the periodic functions $J_i$. 
Similarly, from 
$
\mma_n^{-1} = \Psi_n \Psi_{n+p}^{-1}
$
and
$
\hat{\lax}^{-1} = \Psi_{n+p}^{-1}\Psi_n.
$
we also have
%Again using the cyclic property of the trace we see that
\begin{equation}\label{trinv}
 K^{(2)} = \tr \hat{\lax}^{-1} =\tr \mma_n^{-1} \qquad \mathrm{for} \,\, \mathrm{all} \,\, n.
\end{equation}
However, it is still not obvious from the form of the monodromy matrix $\mma_n$ that $\tr \mma_n=\tr \mma_n^{-1}$, 
which is required for the two non-trivial coefficients in (\ref{Klinear}) to be the same. 
Before considering the general case, we give a couple of examples. 
%
%%%%%%%%%%%%%%%%%
%\subsection{Example $N=3$} 
\begin{example}\label{N3a} 
For the third order case (\ref{3rdmyrec}), as in Example \ref{N3}, 
we have the monodromy matrix  
$
\mma_{0}=\lax_1 \lax_0, 
$
where the matrix in (\ref{Lmat}) has period 2: $\lax_{n+2} =\lax_n$ for all $n$. 
By taking traces it can be verified directly that $K^{(1)} = K^{(2)}=K$,  where 
\beq\label{KN3} 
K = \tr \mma_0=\tr \mma_0^{-1} =J_0J_1-1. 
\eeq 
The above identity, expressing $K$ in terms of $J_0$ and $J_1$, %and 
can also be checked  by comparing the Laurent polynomials 
in (\ref{constK}) and (\ref{N3Js}). 
\end{example} 
%%%%%%%%%%%%%%%%%%
%\subsection{Example $N=4$}
\begin{example}\label{N4} 
For $N=4$ the nonlinear recurrence (\ref{myrec}) becomes
$$
x_{n+4}x_n = x_{n+3}x_{n+1}+a(x_{n+1}+x_{n+2}+x_{n+3}).
$$
The functions defined by (\ref{Jfn}), which appear as %matrix 
entries in (\ref{Lmat}), have period 3. They are % being given by % be defined in terms of the initial data
\begin{equation*}
J_0 = \frac{x_0+x_{2}+a}{x_{1}}, \quad J_1 = \frac{x_{1}+x_{3}+a}{x_{2}}, \quad
J_2 = \frac{x_{0}x_{2}+x_{1}x_{3} +a(x_0+x_{1}+x_{2}+x_{3})}{x_0x_{3}}, 
\end{equation*}
with $J_{n+3}=J_n$ for all $n$. 
Upon taking the trace of monodromy matrix  
$
\mma_{0}=\lax_2 \lax_1 \lax_0,  
$ 
and that of its inverse, it follows from (\ref{treqn}) and (\ref{trinv}) that $K^{(1)} = K^{(2)}=K$,  where 
\beq\label{KN4} 
K = J_0J_1J_2-(J_0+J_1+J_2)+1 
\eeq 
is the explicit formula for $K$ as a symmetric function of $J_0,J_1,J_2$. 
\end{example} 
%
%We now introduce the dressing chain of \cite{vesshab93} as required to prove the linear relation (\ref{Klin}).
%%%%%%%%%%%%%%%%%%%%%%%%%%%%%%%%%%%
\subsection{Connection with the dressing chain}\label{dcmono}

A one-dimensional Schr\"odinger operator %is a second order linear differential operator which 
can be %written in 
factorised %form 
as $L=-(\partial + f)(\partial -f)$, where $\partial$ denotes the operator of differentiation with 
respect to the independent variable, $z$ say, and $f=f(z)$.  The dressing chain, as described in \cite{vesshab93}, 
is the set of ordinary differential equations
\beq\label{dress} 
(f_i+f_{i+1})' = f_i^2-f_{i+1}^2+\alpha_i, %\qquad i=1,\ldots,N-1,
\eeq 
where $\alpha_i$ are constant parameters, and the prime denotes the $z$ derivative. These equations 
arise from  successive Darboux transformations 
$L_i \rightarrow L_{i+1}$ that are used to generate a sequence of Schr\"odinger operators $(L_i)$; the nature of the Darboux 
transformation is such that each operator is obtained from the previous one by  
a reordering of factors and a constant shift: 
$$ 
L_i = -(\partial + f_i)(\partial -f_i) \longrightarrow L_{i+1} = -(\partial - f_i)(\partial +f_i) +\alpha_i = -(\partial + f_{i+1})(\partial -f_{i+1}). 
$$
Of particular interest is the periodic case, where $L_{i+p}=L_i$ for all $i$. In that case, all indices in (\ref{dress})  should be read mod $p$, 
and this becomes a finite-dimensional system for the $f_i$. 
As noted in \cite{vesshab93}, the properties of this system depend sensitively on the parameters $\alpha_i$: when 
$\sum_i \alpha_i\neq 0 $ the general solutions correspond to Painlev\'e transcendents  or their higher order 
analogues (see \cite{takasaki} and references for further details), while for  $\sum_i \alpha_i = 0 $ the solutions 
are of an algebro-geometric nature, corresponding to finite gap solutions of the KdV equation; the latter connection 
goes back to results of Weiss \cite{weiss}. 

Here we are concerned with the case $\sum_i \alpha_i = 0 $ only, so following \cite{vesshab93} we use parameters $\beta_i$ such that 
$\alpha_i = \beta_i -\beta_{i+1}$ mod $p$.    
%is considered such that $f_{i+N}=f_i$ and $\alpha_{i+N}=\alpha_i$. 
For $p$ odd there is an invertible transformation $f_i$ to new coordinates, % $J_i$,   
that is 
\beq\label{Jf}
J_i =f_i+f_{i+1}, \qquad i=1,\ldots,p, 
\eeq 
and in \cite{vesshab93} (where the coordinates $J_i$ are denoted $g_i$) it was shown that 
in this case the periodic dressing chain is a bi-Hamiltonian integrable system, meaning that it has 
a pencil of compatible Poisson brackets together with the appropriate number of Poisson-commuting first integrals to satisfy the requirements of Liouville's theorem. 
(For $p$ odd the transformation (\ref{Jf}) is not invertible, but there is a degenerate Poisson bracket for the dressing chain such that it is an integrable system 
on a generic symplectic leaf.) A recent development   was the observation in \cite{fordyhone12}
that (with all $\beta_i=0$) a combination of these compatible brackets for the dressing chain 
coordinates $J_i$ arises by reduction from the log-canonical Poisson structure for the cluster variables in cluster algebras associated with 
affine $A$-type Dynkin quivers.   In this context, a further observation was that the  linear relations between cluster variables 
found in \cite{fordymarsh09} (see also \cite{assem,fordy10,keller}), which are of the form 
\beq\label{Alin}
x_{n+2p} -\kappa \, x_{n+p} +x_n=0, 
\eeq  
have a non-trivial coefficient $\kappa$ which is the generating function for the first integrals of the dressing chain 
(expressed  in terms of suitable variables $J_i$). The key to understanding these observations is the following 
result, which is omitted from the published version of \cite{fordyhone12} but appeared in the original preprint.  
\begin{lemma} \label{trid} 
If a $2\times 2$ monodromy matrix is defined by  
\beq\label{mstar} 
\mma^* 
=\left(\bear{cc} 0 & \zeta_1 \\ 1 & J_1 \eear\right) \, \left(\bear{cc} 0 & \zeta_2 \\ 1 & J_2 \eear\right) \ldots
\left(\bear{cc} 0 & \zeta_p \\ 1 & J_p \eear\right), 
\eeq 
then (with indices read $\bmod\,  p$) 
the trace is given by the explicit formula
$$\tr  \mma^* =
\prod_{i=1}^p \left( 1+\zeta_i\frac{\partial^2}
{\partial J_i \partial J_{i+1}}\right) \, \prod_{n=1}^p J_n. 
$$
\end{lemma}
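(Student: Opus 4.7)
The plan is to prove the identity by combinatorially expanding both sides and matching them term by term. Writing $A_k = \bigl(\begin{smallmatrix}0 & \zeta_k\\ 1 & J_k\end{smallmatrix}\bigr)$ for the $k$th factor, I would first expand
\[
\tr \mma^* \; = \; \sum_{(r_1,\ldots,r_p)\in\{1,2\}^p}\prod_{k=1}^p(A_k)_{r_k,\,r_{k+1}},\qquad r_{p+1}=r_1.
\]
Since $(A_k)_{11}=0$, only cyclic sequences with no two consecutive $1$'s contribute, and such a sequence is determined by the set $S=\{k:r_k=1\}$, which must be an independent set in the cycle graph $C_p$. Bookkeeping of the surviving matrix entries (with $\zeta_k$ coming from each $1\to 2$ transition and $J_k$ from each $2\to 2$ transition) yields
\[
\tr \mma^* \; = \sum_{S\text{ indep.\ in }C_p}\Bigl(\prod_{k\in S}\zeta_k\Bigr)\Bigl(\prod_{\substack{k:\,k\notin S\\ k+1\notin S}}J_k\Bigr).
\]

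Next I would distribute the product on the right-hand side,
\[
\prod_{i=1}^p\Bigl(1+\zeta_i\tfrac{\partial^2}{\partial J_i\,\partial J_{i+1}}\Bigr)\prod_{n=1}^pJ_n \; = \sum_{T\subseteq\{1,\ldots,p\}}\Bigl(\prod_{i\in T}\zeta_i\Bigr)\prod_{i\in T}\tfrac{\partial^2}{\partial J_i\,\partial J_{i+1}}\prod_{n=1}^p J_n.
\]
Because $\prod_n J_n$ is multilinear, the iterated derivative vanishes whenever any $J_n$ would be differentiated twice; this forces the multiset $\bigcup_{i\in T}\{i,i+1\}$ to be simple, i.e.\ $T$ must itself be an independent set of $C_p$. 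For such $T$, the derivative simply erases the factors $J_i$ and $J_{i+1}$ for each $i\in T$, producing $\prod_{i\in T}\zeta_i \cdot \prod_{n\notin T\cup(T+1)}J_n$.

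Both sums are now indexed by independent sets of $C_p$ with matching $\zeta$-factors, and what remains is to verify that the $J$-factor on each side agrees, interpreted under the cyclic convention $J_{p+1}=J_1$ built into the statement. \textbf{The main obstacle} is precisely this cyclic index bookkeeping: matching the set of indices at which a $J$-factor survives in the trace expansion with the set of indices at which a $J$-factor survives in the derivative expansion. Once the labelling convention is nailed down (equivalently, once one fixes the direction in which the $\lax_n$-product in the paper's monodromy is identified with the $A_i$-product of the lemma), the identification of corresponding terms is a direct verification, and the lemma follows.
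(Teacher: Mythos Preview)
Your expansion is essentially the paper's strategy in different clothing: the paper also matches coefficients of each $\zeta$-monomial, but does so by splitting each factor as $A_k = B_k + \zeta_k N$, where $B_k$ is the $\zeta_k=0$ specialisation and $N$ is the elementary nilpotent, and then collapsing each surviving product via the identities $B_1\cdots B_\ell = J_1\cdots J_{\ell-1}B_\ell$ and $B_jNB_k=B_k$ (with $N^2=0$ killing adjacent $\zeta$'s). Your trace-path count over independent sets of $C_p$ is the same combinatorics.

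However, your closing hand-wave hides a real point. Your condition ``$k\notin S$ and $k{+}1\notin S$'' reads $k\notin S\cup(S{-}1)$, so the trace side produces $\prod_{k\notin S\cup(S-1)}J_k$, whereas the derivative side produces $\prod_{n\notin T\cup(T+1)}J_n$. These do \emph{not} coincide with $S=T$ when the $\zeta_i$ are distinct: for $p=3$ one computes directly that $\tr\mma^*=J_1J_2J_3+\zeta_1J_2+\zeta_2J_3+\zeta_3J_1$, while the formula as written gives $J_1J_2J_3+\zeta_1J_3+\zeta_2J_1+\zeta_3J_2$. The discrepancy is a one-step cyclic shift (the identity holds with $\partial_{J_{i-1}}$ in place of $\partial_{J_{i+1}}$, or equivalently with the factors of $\mma^*$ in the opposite order), and it vanishes in the paper's applications because there all $\zeta_i$ are equal. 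So rather than declaring the final step ``a direct verification once the labelling convention is nailed down'', you should actually run the small case and record the correct indexing.
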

\begin{proof} 
This is very similar to the proof of Theorem 2 in \cite{vesshab93}, 
where a Lax equation 
for the dressing chain  
is given for a different monodromy matrix, 
written in terms of the variables $f_i$.   
The above identity for the trace of $\mma^*$ follows by calculating the coefficient of each monomial in the $\zeta_j$ on either side 
of the equation, using
the relations
$$
\left(\bear{cc} 0 & 0 \\ 1 & J_1 \eear\right) \, \left(\bear{cc} 0 & 0 \\ 1 & J_2 \eear\right) \ldots
\left(\bear{cc} 0 & 0 \\ 1 & J_\ell \eear\right) = J_1J_2\ldots J_{\ell -1} \,
\left(\bear{cc} 0 & 0 \\ 1 & J_\ell \eear\right)
$$
and
$$
\left(\bear{cc} 0 & 0 \\ 1 & J_j \eear\right)\, \left(\bear{cc} 0 & 1 \\ 0 & 0 \eear\right) \,
\left(\bear{cc} 0 & 0 \\ 1 & J_k \eear\right) =
\left(\bear{cc} 0 & 0 \\ 1 & J_k \eear\right),
$$
and noting that in the latter relation the central matrix on the left is nilpotent.
\end{proof} 
\begin{remark} 
For the period $p$  dressing chain (\ref{dress}) with $\alpha_i=\beta_i - \beta_{i+1}$, the Poisson-commuting first integrals  
are found in terms of the variables $J_i$  by setting $\zeta_i=\beta_i - \lambda$ in (\ref{mstar}) and 
expanding   $\tr  \mma^*$ in powers of the spectral parameter $\lambda$. 
\end{remark}
The $2\times 2$ monodromy matrix (\ref{mstar}) also provides the key to understanding the detailed structure of 
the linear relation (\ref{Klin}) associated with (\ref{myrec}). 
\begin{theorem}\label{2mdy} 
In terms of the quantities $J_n$ defined in (\ref{Jfn}), the coefficient $K$ in (\ref{Klin})  is given  by the formula 
\beq\label{kform} 
K= 1+ 
\prod_{i=1}^p \left( 1 - \frac{\partial^2}
{\partial J_i \partial J_{i+1}}\right) \, \prod_{n=1}^p J_n. 
\eeq  
\end{theorem}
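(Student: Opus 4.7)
The plan is to exploit the factorisation of the period-$p$ linear relation (\ref{Jlinear}) through a second-order operator with periodic coefficients, reducing the $3\times 3$ monodromy $\mma_n$ to a $2\times 2$ monodromy to which Lemma \ref{trid} applies. The key observation is that (\ref{Jlinear}) can be written as $(E - 1)L\,x_n = 0$, where $E$ is the shift operator and $L$ is the second-order operator defined by $(Ly)_n := y_{n+2} - J_n y_{n+1} + y_n$; this $L$ has period-$p$ coefficients since $J_{n+p}=J_n$.

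First I would establish a block structure for $\mma_n$. Let $V$ denote the $3$-dimensional solution space of (\ref{Jlinear}) and consider the $2$-dimensional subspace $V_0 := \ker L \subset V$. Because $L$ has period $p$, $V_0$ is invariant under the monodromy (shift by $p$). On the quotient $V/V_0$, which is one-dimensional and parametrised by the value $c := (Ly)_n$ (constant in $n$ since $(E-1)Ly = 0$), the monodromy acts trivially because $c$ is obviously preserved under index shifts. Hence, in a basis adapted to the filtration $V_0 \subset V$, the monodromy takes block-triangular form $\mma_n = \bigl(\begin{smallmatrix} M_L & * \\ 0 & 1 \end{smallmatrix}\bigr)$, where $M_L$ is the $2\times 2$ monodromy of $L$. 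Each $\lax_n$ has determinant $1$, so $\det M_L = \det \mma_n = 1$, and the characteristic polynomial of $\mma_n$ factors as $(\lambda-1)\bigl(\lambda^2 - (\tr M_L)\lambda + 1\bigr)$. Matching with $\lambda^3 - K^{(1)}\lambda^2 + K^{(2)}\lambda - 1$ from (\ref{Lhat}) gives $K^{(1)} = K^{(2)} = K = \tr M_L + 1$, thereby also completing the proof of Theorem \ref{thmlin}.

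Second I would compute $\tr M_L$ using Lemma \ref{trid}. Writing $Ly = 0$ as a first-order system for $(y_{n+1}, y_n)^T$, the transfer matrix is $T_n = \bigl(\begin{smallmatrix} J_n & -1 \\ 1 & 0 \end{smallmatrix}\bigr)$, so $M_L = T_{p-1}\cdots T_1 T_0$. To match the form (\ref{mstar}), observe that $T_n^{-1} = P S_n P^{-1}$ with $S_n = \bigl(\begin{smallmatrix} 0 & -1 \\ 1 & J_n \end{smallmatrix}\bigr)$ and $P = \mathrm{diag}(1, -1)$, hence $M_L^{-1} = P(S_0 S_1\cdots S_{p-1})P^{-1}$. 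Since a unimodular $2\times 2$ matrix has the same trace as its inverse, $\tr M_L = \tr(S_0\cdots S_{p-1})$, which by cyclic invariance of trace equals $\tr \mma^*$ with $\zeta_i = -1$. Lemma \ref{trid} then delivers the formula (\ref{kform}).

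The main obstacle I anticipate is the first step, namely proving cleanly that $1$ is an eigenvalue of $\mma_n$; this is the content left unresolved in Theorem \ref{thmlin} and is what forces $K^{(1)} = K^{(2)}$. The crucial ingredient is the identification of the quotient $V/V_0$ as the ``value of $c$'' space, on which any index shift acts trivially by periodicity of $J_n$. A secondary subtlety is matching the transfer matrices of $L$ with the specific form (\ref{mstar}) used in Lemma \ref{trid}, which is handled by the similarity through $P = \mathrm{diag}(1,-1)$.
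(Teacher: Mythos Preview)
Your argument is correct and takes a genuinely different route from the paper. The paper works concretely with the actual iterates $x_n$ via the \emph{inhomogeneous} second-order relation (\ref{Jarrange}): it packages two shifted copies of the sequence into a $2\times 2$ matrix $\Phi_n$, derives the affine recursion $\Phi_{n+1}=\Phi_n\lax^*_n-a\cax$, iterates it over a period to obtain $\Phi_{n+p}=\Phi_n\mma^*_n-a\cax\kax_n$, applies Cayley--Hamilton to $\mma^*_n$, and extracts an inhomogeneous order-$2p$ relation (\ref{inhgk}) whose forward difference over a period yields (\ref{Klin}) with $K=\kappa+1$, $\kappa=\tr\mma^*_n$; Lemma~\ref{trid} then gives (\ref{kform}). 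By contrast, you work with the \emph{homogeneous} third-order relation (\ref{Jlinear}) and its factorisation $(\shi-1)L$, producing an invariant filtration $V_0\subset V$ of the solution space under the period-$p$ shift; the block-triangular form of the $3\times3$ monodromy with a forced eigenvalue $1$ on $V/V_0$ makes $K^{(1)}=K^{(2)}=1+\tr M_L$ transparent, and Lemma~\ref{trid} finishes as before. Your approach is cleaner for the missing step of Theorem~\ref{thmlin} (that $1$ is an eigenvalue of $\mma_n$), and avoids tracking the inhomogeneous remainder through the iteration. The paper's approach, on the other hand, produces the intermediate inhomogeneous relation (\ref{inhgk}) as a byproduct, which is used in the subsequent remarks to connect with Chebyshev polynomials and to formulate the conjecture $\tilde{K}=\mu/a^3$; that extra structure is not visible from your filtration argument. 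One small point worth making explicit in your write-up: the identification of the characteristic polynomial with $\lambda^3-K^{(1)}\lambda^2+K^{(2)}\lambda-1$ uses the similarity $\hat{\lax}=\Psi_n^{-1}\mma_n\Psi_n$ from \S\ref{mono3}, or equivalently (\ref{treqn}) and (\ref{trinv}) directly.
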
 
\begin{proof}
Upon introducing the $2\times 2$ matrices 
$$ 
\Phi_n = \left(\bear{cc} x_n & x_{n+1} \\ x_{n+p} & x_{n+p+1} \eear\right), \quad 
\lax^*_n = \left(\bear{cc} 0 & -1 \\ 1 & J_1 \eear\right), \quad 
\cax = \left(\bear{cc} 0 & 1 \\ 0 & 1  \eear\right), 
$$ 
the inhomogeneous linear equation (\ref{Jarrange}) implies that 
\beq\label{phin} 
\Phi_{n+1} = \Phi_n \lax^*_n - a \cax. 
\eeq 
Seeking an appropriate inhomogeneous counterpart of   (\ref{Klin}), we define the 
matrix 
$$ 
\mma^*_n = \lax^*_{n}  \lax^*_{n+1}\cdots 
\lax^*_{n+p-1}, 
$$
which (up to cyclic permutations) is a special case of the monodromy matrix $\mma^*$ above,  obtained  from (\ref{mstar}) by setting $\zeta_i = -1$ for all $i$. 
Now, by applying a shift to (\ref{phin}), we have 
$$  
\Phi_{n+2} =(\Phi_n\lax^*_n - a\cax)\lax^*_{n+1}-a\cax = \Phi_n\lax^*_n \lax^*_{n+1} - a\cax (\lax^*_{n+1} + {\bf I}) 
$$ 
(with ${\bf I}$ being the $2\times 2$ identity matrix), 
and so  by induction, after a total of $p-1$ shifts,  we find that 
\beq\label{pshi} 
\Phi_{n+p} = \Phi_n \mma^*_n -  a\cax\kax_n, 
\eeq  
where 
$$ 
\kax_n = \lax^*_{n+1}\cdots\lax^*_{n+p-1}+\lax^*_{n+2}\cdots\lax^*_{n+p-1}+\ldots + \lax^*_{n+p-1} + {\bf I}. 
$$

Note that the entries of $\lax^*_n$, and hence those of $\mma^*_n$ and $\kax_n$, are all periodic with period $p$, 
so from (\ref{pshi}) it follows that 
$$ 
\Phi_{n+2p} = \Phi_{n+p}\mma^*_n -  a\cax\kax_n = \Phi_n (\mma^*_n)^2  -a\cax\kax_n(\mma^*_n+{\bf I}). 
$$ 
The latter equation can be further simplified with the use of the Cayley-Hamilton theorem, which 
gives 
$$ 
(\mma^*_n)^2 - \kappa \mma^*_n+ {\bf I}=0, \qquad \mathrm{with} \quad \kappa = \tr \mma^*_n, 
$$ 
and hence (after using (\ref{pshi}) once again) we find 
$$
\Phi_{n+2p} = \kappa \Phi_{n+p} - \Phi_{n} - a\cax\kax_n\big(\mma^*_n+(1-\kappa){\bf I} \big). 
$$ 
The $(1,1)$ entry of the above matrix equation yields an inhomogeneous version of (\ref{Alin}), namely  
\beq\label{inhgk} 
x_{n+2p}-\kappa x_{n+p} + x_n +a \tilde{J}_n=0, %\qquad \mathrm{where} \quad \tilde{J}_{n+p}=\tilde{J}_n. 
\eeq 
where the quantity  $\tilde{J}_n$ is periodic with period $p$. 
We claim that this is the desired   inhomogeneous counterpart of   (\ref{Klin}).  Indeed, applying the shift $n\to n+p$ to (\ref{inhgk}) 
and then subtracting the original equation gives 
$$ 
x_{n+3p}  -(\kappa +1) \big(x_{n+2p}- x_{n+p}\big) -x_n = 0, 
$$  
which is (\ref{Klin}) with $K=\kappa + 1$, and the formula (\ref{kform}) then follows from Lemma \ref{trid}. 
This also verifies that $K^{(1)} = K^{(2)}$ in (\ref{Klinear}), completing the missing step in the proof of 
Theorem  \ref{thmlin}. 
\end{proof} 
\begin{remark} 
 Observe that 1 is a root of the characteristic polynomial of the  linear relation  (\ref{Klin}), which means that this homogeneous  
equation is a total difference.  %form of the 
 Hence, using $\shi$ to denote the shift operator, we obtain the relation 
$$ 
\big( 
\shi^{3p-1} + \ldots +\shi^{2p} -\kappa (\shi^{2p-1} + \ldots +\shi^{p}) + \shi^{p-1} + \ldots +\shi + 1 
\big) \, x_n +a\tilde{K}=0, 
$$ 
where $\tilde{K}$ is a first integral. In the case $p=N-1=2$, a formula for   $\tilde{K}$ 
as a Laurent polynomial in the initial data can be found in \cite{honechapt} (see Theorem 5.1 therein). 
Although we do not have a general formula for $\tilde{K}$, analogous to (\ref{kform}), 
direct calculation shows that $\tilde{K}=J_0+J_1+4$ for $p=2$, and 
$
\tilde{K}=J_0J_1+J_1J_2+J_2J_0 +2 ( J_0+J_1+J_2) + 3
$
for $p=3$. Moreover, in the cases $p=2,3,4,5$ we have verified with 
computer algebra that $\tilde{K}$ is given in terms of the 
$3\times 3$ determinant (\ref{mu}) by 
$$ 
\tilde{K}=\frac{\mu}{a^3}, 
$$ 
and we conjecture that this is always so.  
\end{remark} 
\begin{remark} 
The form of the inhomogeneous relation (\ref{inhgk}) means that 
the solution of the initial value problem for (\ref{myrec}) can 
be written explicitly using Chebyshev polynomials of the first and 
second kind with argument $\kappa /2$,  that is    
$$ 
T_j(\cos \theta) = \cos n\theta , \qquad 
U_j(\cos \theta) = \frac{\sin (n+1)\theta}{\sin \theta} ,  
\qquad \mathrm{where} \quad  \kappa =2\cos\theta. 
$$
For details of the solution in the case $N=3$, see Proposition 5.2 in    \cite{honechapt}. 
\end{remark} 
%%%%%%%%%%%%%%%%%%%%%%%%%%%%%%%%%%
%%%%%%%%%%%%%%%%%%%%%%%%%%%%%
\section{Conclusions}

We have shown that each of the nonlinear recurrences (\ref{myrec}) fits into the framework of Lam and Pylyavskyy's LP algebras, implying that the 
Laurent property holds. Furthermore, each recurrence is linearisable in two different ways: with a constant-coefficient homogeneous linear relation, and with another such 
relation with periodic coefficients. The connection between the first integral $K$ in (\ref{Klin}) and the Poisson-commuting first integrals of the dressing chain is intriguing:  
it raises the question of whether LP algebras might admit natural Poisson and/or presymplectic structures, analogous to those for %that exist for 
cluster algebras \cite{gsvduke}.

The recurrences (\ref{hhodd}) considered by Heideman and Hogan also possess the Laurent property. (See \cite{hoganthesis} for a more detailed discussion.) It can also be shown 
that, beyond the linear relation  (\ref{heholin})  for a particular integer sequence,  the result of Proposition \ref{hhthm} extends to arbitrary initial data.  In this case there is also an analogue of (\ref{Jlinear}), 
with periodic coefficients, but with a more complicated structure. These results will be presented elsewhere.

\section{Appendix} 

Here we consider the particular family of integer sequences that is generated by 
iterating each recurrence (\ref{myrec}) 
with all  $N$ initial data set to the value 1, and prove %outline 
%the proof of 
the following result. 
\begin{theorem}\label{init1s} 
For each $N\geq 2$, the recurrence (\ref{myrec}) with initial values 
$x_n=1$ for $n=0,\ldots, p=N-1$ generates a sequence of  integers $(x_n)$  
which is symmetric in the sense that $x_n=x_{p-n}$ for all $n\in\mathbb{Z}$. 
The first few terms are given by 
\beq\label{terms} 
x_{p+j}=\fib_{2j}\, p+1, \quad  
x_{2p+j}=\fib_{2j}\fib_{2p}\, p^2 +\Big(\fib_{2j}(\fib_{2p+2}+1)-\fib_{2j-2}\fib_{2p}\Big) p+1, 
\quad j=0,\ldots , p. 
\eeq 
In general, for each $k\geq 0$ the terms $x_{kp+j}$ for $1\leq j\leq p$ can be 
written as polynomials in $p$ whose coefficients are themselves polynomials in Fibonacci numbers with even 
index.  
\end{theorem}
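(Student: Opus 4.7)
The main tool will be the inhomogeneous three-term relation (\ref{Jarrange}), namely $x_{n+2}=J_n x_{n+1}-x_n-a$, together with the $p$-periodicity of the coefficients $J_n$ supplied by Theorem \ref{thmlin}. Setting $a=1$ (as required for the stated formulas), the first task is to evaluate the $J_n$ on the chosen initial data: the definition (\ref{Jfn}) directly gives $J_n=3$ for $n=0,\ldots,p-2$, while one application of (\ref{myrec}) at $n=0$ yields $x_{p+1}=p+1$ and hence $J_{p-1}=p+3$. By periodicity the sequence $(J_n)$ takes only the two values $3$ and $p+3$, with $J_m=p+3$ precisely when $m\equiv p-1\pmod p$.

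The symmetry $x_n=x_{p-n}$ is most easily derived directly from (\ref{myrec}): under the substitution $n\mapsto p-n$, both sides of the relation are mapped to themselves (the two factors on each side swap roles and the summation index on the right is reversed). Hence $\tilde x_n:=x_{p-n}$ satisfies the same recurrence, and since $\tilde x_n=1=x_n$ for $n=0,\ldots,p$, uniqueness of the solution forces $\tilde x_n=x_n$ throughout.

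For the explicit formulas (\ref{terms}), partition the sequence into blocks of consecutive indices $kp,\ldots,(k+1)p$. Within block $k$ all relevant values of $J_n$ equal $3$, so (\ref{Jarrange}) collapses to $x_{n+2}=3x_{n+1}-x_n-1$. This admits $x\equiv 1$ as a particular solution and, by the identity $\fib_{2m+2}=3\fib_{2m}-\fib_{2m-2}$, has $\{\fib_{2j},\fib_{2j-2}\}$ as a basis of homogeneous solutions. Matching initial conditions yields the within-block formula
\[
x_{kp+j}=1+(x_{kp+1}-1)\fib_{2j}-(x_{kp}-1)\fib_{2j-2},\qquad 0\leq j\leq p.
\]
Taking $k=1$ with $(x_p,x_{p+1})=(1,p+1)$ immediately gives the first expression in (\ref{terms}). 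The boundary step at $n=(k+1)p-1$, where $J_n=p+3$, yields $x_{(k+1)p+1}=(p+3)\,x_{(k+1)p}-x_{(k+1)p-1}-1$; feeding the resulting $(x_{2p},x_{2p+1})$ back into the within-block formula for $k=2$ reproduces the second expression in (\ref{terms}).

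For the final polynomial claim, set $a_k:=x_{kp}$ and $b_k:=x_{kp+1}$; the within-block formula writes every $x_{kp+j}$ as an affine combination of $a_k,b_k$ with coefficients polynomial in $\fib_{2j},\fib_{2j-2}$. The transition from block $k$ to block $k+1$ is an affine map $(a_k,b_k)\mapsto(a_{k+1},b_{k+1})$ whose coefficients lie in $\mathbb Z[p,\fib_{2p-4},\fib_{2p-2},\fib_{2p}]$, with the factor $p+3$ at the boundary step being the sole source of $p$-dependence. An elementary induction on $k$ starting from $a_0=b_0=1$ then establishes that $a_k$ and $b_k$ are polynomials in $p$ with coefficients polynomial in even-indexed Fibonacci numbers, and substituting back into the within-block formula transfers this structure to every $x_{kp+j}$ with $1\leq j\leq p$. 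The only real obstacle is the bookkeeping in this block-by-block induction; no individual step is hard, but one must keep careful track of which Fibonacci numbers enter at each stage.
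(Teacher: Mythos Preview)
Your argument is correct and gives a genuinely different---and in several respects cleaner---route than the paper's own proof.

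The paper does not invoke the linearisation at all: for the symmetry it argues exactly as you do, but for the formulas (\ref{terms}) it simply substitutes the claimed expressions into the \emph{nonlinear} recurrence (\ref{myrec}) for $n=0,\ldots,p-1$ and then $n=p,\ldots,2p-1$, and checks equality using the summation identity $\fib_{2n+2}-\fib_{2n}=\fib_{2n}+\fib_{2n-2}+\cdots+\fib_2+1$ together with the $2\times 2$ Fibonacci determinant $\fib_{2j-2}\fib_{2j+2}-\fib_{2j}^2=-1$. The general polynomial-in-$p$ claim is left largely implicit. By contrast, you feed in the inhomogeneous linear relation (\ref{Jarrange}) and the periodicity of the $J_n$, compute $J_0=\cdots=J_{p-2}=3$, $J_{p-1}=p+3$, and reduce everything to solving $x_{n+2}=3x_{n+1}-x_n-1$ within blocks plus a single boundary step. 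What the paper's approach buys is self-containment: it does not rely on the proof of Theorem \ref{thmlin} (though note there is no circularity in your use of it, since the periodicity of $J_n$ is established directly via (\ref{Jeqn}) and does not need the appendix). What your approach buys is that the block structure becomes transparent, the Fibonacci summation identity is bypassed entirely, and the affine transition $(a_k,b_k)\mapsto(a_{k+1},b_{k+1})$ makes the inductive polynomial-in-$p$ statement a clean bookkeeping exercise rather than a claim left to the reader. Your explicit observation that $a=1$ is being assumed is also worth noting; the paper leaves this implicit.
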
 
First of all, to see why $x_n=x_{p-n}$, observe that the nonlinear 
recurrence  (\ref{myrec}) has two obvious symmetries: translating $n$, and replacing $n$ by $-n$. 
Combining these two symmetries, it follows that the sequence $(x_{p-n})$ 
satisfies (\ref{myrec}) whenever the sequence $(x_n)$  does; %and since the choice of 
the fact that the %%  
initial data 
$1,1,\ldots ,1$ is symmetric under $n\to p-n$ 
%it follows 
implies 
that these two sequences 
coincide. %%
% in this case. 

The rest of the proof %of the above result  
proceeds inductively, and relies on various 
identities for Fibonacci numbers; note that 
the convention $\fib_0=0$, $\fib_1 =1$ is taken for the initial Fibonacci numbers.  
From the Fibonacci recurrence $\fib_{n+2}=\fib_{n+1}+\fib_n$ 
it follows that the even index terms satisfy 
\beq\label{fibdo} 
\fib_{2n+4}=3\fib_{2n+2}-\fib_{2n}.  
\eeq 
For $n=0,\ldots , p-1$  
the relation (\ref{myrec}) is satisfied by the initial data and the first set of terms in (\ref{terms}) by virtue of 
the identity 
\beq \label{fibid}
\fib_{2n+2}-\fib_{2n} = \fib_{2n}+\fib_{2n-2}+\ldots + \fib_2 + 1,   
\eeq  
which follows from (\ref{fibdo}) by induction. To verify the formula for the second set of terms  in (\ref{terms}),  
%are correct,
it suffices to substitute these expressions into (\ref{myrec}) with $n=p,\ldots , 2p-1$ 
and compare coefficients of powers of $p$ on both sides, making use of (\ref{fibid}) again, together 
with the identity 
$$ 
\left|\begin{array}{cc} \fib_{2j-2} & \fib_{2j} \\ 
           \fib_{2j} &     \fib_{2j+2} 
\end{array}\right| =-1. 
$$ 
The fact that the latter $2\times 2$ determinant is a constant (independent of $j$) follows from  (\ref{fibdo}), 
while the value $-1$ is obtained immediately from  $\fib_0=0$, $\fib_{\pm 2}=\pm 1$.           

Proposition \ref{inits}  is now seen to be a special case of Theorem \ref{thmlin},  by making use of %it with 
Theorem \ref{init1s}. Indeed, from (\ref{Klin}) with $n=j-p$ for $1\leq j\leq p$, the above choice of initial data 
and the terms in (\ref{terms}), as well as the symmetry of the sequence, yields 
%$$ K = \frac{x_{2p+j}-x_{-p+j}}{x_{p+j}-x_j}= \frac{x_{2p+j}-x_{2p-j}}{x_{p+j}-x_j}=\fib_{2p} p+\fib_{2p+2}+1 - \frac{\fib_{2j-2}\fib_{2p}+\fib_{2p-2j}}{\fib_{2j}}. $$ 
$K=(x_{2p+j}-x_{-p+j})/(x_{p+j}-x_j)=(x_{2p+j}-x_{2p-j})/(x_{p+j}-x_j)=\fib_{2p} p+\fib_{2p+2}+1-(\fib_{2j-2}\fib_{2p}+\fib_{2p-2j})/\fib_{2j}$. 
In order to see why the latter expression is a constant (independent of $j$), and to obtain the 
particular form of the coefficient 
$K$ that  appears in (\ref{inik}), it is enough to verify the identity 
$\fib_{2j}\fib_{2p-2} - \fib_{2j-2}\fib_{2p}= \fib_{2p-2j}$, 
which is %easily 
done by writing the even index Fibonacci numbers as %in the form 
%%%$$ \fib_{2n} = \frac{\mathrm{sinh}\, n\theta}{ \mathrm{sinh}\, \theta}, \qquad \theta = \log \left(\frac{3+\sqrt{5}}{2}\right), $$ 
$ \fib_{2n} = \mathrm{sinh}\, n\theta / \mathrm{sinh}\, \theta$ with $\theta = \log \left(\frac{3+\sqrt{5}}{2}\right) $, 
and using the addition formula for the hyperbolic sine. 

%\noindent {\bf Acknowledgments.} CW was funded by an EPSRC studentship. 

\affiliationone{% in this example, two authors share an institution
   A.N.W. Hone  and C. Ward \\
   School of Mathematics, Statistics \& Actuarial Science\\ 
   University of Kent, 
   Canterbury CT2 7NF\\ 
   U.K.
 \email{anwh@kent.ac.uk\\ 
            cw336@kent.ac.uk}}
\end{document}